\newdimen\prevdp
\def\leftlabel#1{\noalign{\prevdp=\prevdepth
   \kern-\prevdp\nointerlineskip\vbox to0pt{\vss\hbox{\ensuremath{#1}}}\kern\prevdp}}
\newcommand{\eps}{\ensuremath{\varepsilon}\xspace}
\renewcommand{\epsilon}{\eps}
\newcommand{\ignore}[1]{}
\renewcommand{\leq}{\leqslant}
\newtheorem{mydef}{Definition}
\newtheorem{mytheo}{Theorem}
\newtheorem{mylemma}{Lemma}
\newtheorem{myobservation}{Observation}
\DeclarePairedDelimiter\floor{\lfloor}{\rfloor}
\begin{document}

\title{Designing and Connectivity Checking of Implicit Social Networks from the User\mbox{-}Item Rating Data\thanks{The author is supported by the Post Doctoral Fellowship Grant provided by Indian Institute of Technology Gandhinagar (Project No. MIS/IITGN/PD-SCH/201415/006). A small part of this study has been previously published as \cite{banerjee2017algorithms}.}
}


\author{Suman Banerjee        
}


\institute{Suman Banerjee \at
              Department of Computer Science and Engineering, \\
              Indian Institute of Technology, Gandhinagar.\\
              \email{suman.b@iitgn.ac.in}           
}

\date{Received: date / Accepted: date}

\maketitle

\begin{abstract}
\emph{Implicit Social Network} is a connected social structure among a group of persons, where two of them are linked if they have some common interest. One real\mbox{-}life example of such networks is the implicit social network among the customers of an online commercial house, where there exist an edge between two customers if they like similar items. Such networks are often useful for different commercial  applications such as \textit{target advertisement}, \textit{viral marketing}, etc. In this article, we study two fundamental problems in this direction. The first one is that, given the user\mbox{-}item rating data of an E\mbox{-}Commerce house, how we can design implicit social networks among its users and the second one is at the time of designing itself can we obtain the connectivity information among the users. Formally, we call the first problem as the \textsc{Implicit User Network Design} Problem and the second one as \textsc{Implicit User Network Design with Connectivity Checking} Problem. For the first problem, we propose three different algorithms, namely \emph{`Exhaustive Search Approach'}, \emph{`Clique Addition Approach'}, and \textit{`Matrix Multiplication\mbox{-}Based Approach'}. For the second problem, we propose two different approaches. The first one is the sequential approach: designing and then connectivity checking, and the other one is a concurrent approach, which is basically an incremental algorithm that performs  designing and connectivity checking simultaneously. Proposed methodologies have experimented with three publicly available rating network datasets such as \emph{Flixter}, \textit{Movielens}, and \textit{Epinions}. Reported computational time  shows that the `Clique Addition Approach' is the fastest one for designing the implicit social network. For designing and connectivity checking problem the concurrent approach is faster than the other one. We have also investigated the scalability issues of the algorithms by increasing the data size. 
\keywords{Social Networks \and Rating Data \and Clique \and Connected Component}
\end{abstract}

\section{Introduction} \label{intro}
A \emph{social network} is an interconnected structure among a group of agents that is formed for social interactions \cite{wasserman1994social}. Here, agents may be the customers of a commercial house, researchers, etc. and their relationship is  friendship, co\mbox{-}authorship, respectively. These are nowadays open platforms, where information, rumors, ideas, innovations, etc. spread widely and rapidly. Use of social networks varies from the prediction of customer behavior \cite{goel2013predicting} to understanding the \textit{sms wormhole} propagation \cite{xiao2017modeling}. One of the important phenomena of social networks is the \emph{information diffusion} and this means that if a user has some information then he or she tends to share it with his or her neighbors. Thus information propagates from one part of the network to the other. This phenomenon has been exploited by the E\mbox{-}Commerce houses and found potential applications in \emph{viral marketing} \cite{chen2010scalable}, \emph{computational advertisement} \cite{huh2017considerations}, \emph{personalized recommendation} \cite{zhang2013socconnect}, finding influential twitters \cite{riquelme2016measuring}, feed ranking \cite{bonchi2013meme} and so on. Due to different commercial applications of social networks, the last one and half decades have  witnessed a significant interest in mining and analyzing social networks. Look into \cite{aggarwal2014evolutionary} and \cite{al2018analysis} for recent surveys.

\par Based on the design methodology, social networks are of two types: \textit{explicit social networks} (e.g. \emph{Twitter}, \emph{Facebook},  etc.) where users choose their friends by themselves and \textit{implicit social networks} \cite{losup2014analyzing} (e.g. \textit{Epinions}, \textit{Flixter}, etc.) where people are connected based on their common interest; i.e., two users are linked if they have rated (or liked or searched) similar items. For different commercial applications of social networks such as \textit{viral marketing} \cite{domingos2005mining}, \textit{computational advertisement}, \textit{item recommendation} \cite{yang2013bayesian} prior knowledge of the user's on\mbox{-}line behavior is important. For performing these commercial activities sometimes the implicit social network is preferred over explicit one due to the following two reasons. Firstly, in implicit social networks, users are connected based on similar item preferences. Hence, a neighbor's preference can be exploited to predict the preferences of a user with the unknown identity. Secondly, the network is designed and maintained by the E\mbox{-}Commerce house itself, hence it is completely accessible to them \cite{hill2005viral}. So, it is an important question, how to design the implicit social network in a given context. It is surprising to see that the literature in this direction is very limited. To the best of the author's knowledge, other than the \cite{podobnik2015implicit} and \cite{van2014analyzing} there does not exist any study that deals with this problem. From the E\mbox{-}Commerce house perspective, one viable data where interactions between users and items are recorded is the \emph{user\mbox{-}item rating data}. In this paper, we initiate the study of the designing implicit user network \footnote{As, in this study, we are concerned with the designing the implicit social network, where the customers of an E\mbox{-}Commerce house are the users of the network, hence in the rest of the paper we use the two terms: `implicit user network' and `implicit social network' interchangeably.} from rating data.
\par Knowledge regarding the structure of the implicit user network is important for many commercial applications. Think of a situation when an E\mbox{-}Commerce house does viral marketing for its newly launched product. For this purpose, they distribute a number of sample items to influential users with a hope that a significant number of them will be likeing it and start sharing the message among their friends in the network. This diffusion phenomenon will be continued and at the end, majority of the users will come to know about the item. The key issue that comes out in the described context is that which users should be chosen initially for initiating the information diffusion. This problem is popularly known as the influence maximization problem and the users who initiate the diffusion process is called as the `seed users' or `seed nodes' \cite{banerjee2018survey}. Now, it is important to observe that the influence of a seed user can not go beyond the connected component in which it belongs. So, it is important during the seed set selection for the influence maximization process, the component information of the implicit user network should be exploited. Hence, from the described context, not only the designing but also connectivity checking is an important problem. In this paper, along with the designing of implicit user network, we also study the problem of designing as well as connectivity checking of this network. Particularly, we make the following contributions in this paper:
\begin{itemize}
	\item We propose the problem of designing and connectivity checking of implicit user network from the user\mbox{-}item rating data. 
	\item For the \textsc{Implicit User Network Design} Problem, we propose three different approaches, namely, exhaustive search approach, clique addition approach , and matrix multiplication\mbox{-}based approach.
	\item For the \textsc{Implicit User Network Design With Connectivity Checking} Problem, we propose two approaches. First one is the sequential approach: designing and then connectivity checking, and the other one is a concurrent approach: an incremental algorithm, which does the designing and connectivity checking simultaneously.
	\item All the algorithms presented in this paper has been analyzed to understand their time and space requirement.
	\item Proposed algorithms have been implemented with three publicly available user\mbox{-}item rating datasets and an extensive set of experiments have been conducted to understand the efficiency of the algorithms. To investigate the scalability issues of the algorithms they are executed on increasing the input data size. 
\end{itemize}
\par The remaining portion of this article has been arranged in the following way: Section \ref{Sec:Introduction} contains some relevant studies from the literature. Section \ref{Sec:PD} describes some preliminary concepts and define both the problems formally. Proposed methodologies for both the problems have been described in Section \ref{Sec:Proposed}. In Section \ref{Sec:Experimental}, proposed methodologies have been evaluated, and finally, Section \ref{Sec:Conclusion} concludes this study and provides future research  directions.
\section{Related Work} \label{Sec:Introduction}
In this section, some relevant works from the existing literature have been described. This section is broadly divided into two parts. In the first part, we report literature related to the design and analysis of implicit social networks, whereas in the second part we do the same for different applications of the implicit social network.
\subsection{Design and Analysis of Implicit Social Network} 
\cite{gupte2012measuring} proposed an \textit{axiomatic framework}  for measuring the \textit{connectedness} and \textit{tie strength} of an implicit social network. Their methodology is also helpful for inferring implicit relation among a set of people by tie strength.  \cite{li2017influential} proposed a \emph{multi-task low-rank linear influence model} for detecting influential nodes from an implicit social network. \cite{losup2014analyzing} proposed a methodology for designing an implicit social network from real-world data collected from three different game genres which will be beneficial to both players as well as game operators.  \cite{podobnik2015implicit} showed  that implicit social network designed from their on\mbox{-}line behavior actually able to predict hidden relationship among them.  \cite{Taheri} proposed a methodology for extracting implicit social relationship based on rating prediction using the concept \emph{Hellinger Distance}. They have performed social recommendation on this network and their experimental results show that use of implicit user relation in social recommendation methods generate almost identical preferences as explicit trust values. \cite{zhang2014empirical} proposed a methodology to design a implicit brand network from the dataset consisting of historical activities of users on a social media platform. Their experiments answer many interesting research questions about the topology of the brand network, number of users in an influential brand etc. \cite{song2010extraction} proposed a noble methodology for extracting hidden implicit social relationship from messaging cascade. \cite{nauerz2008implicit} proposed a methodology for deriving an implicit social network among the users of a web portal and they have shown that this network can enhance interaction and collaboration in a community. 
\subsection{Applications of Implicit Social Network}
As mentioned in the literature, implicit social networks are found to be useful in designing and improving recommender systems, designing social markets, link prediction in social networks, and so on.  \cite{reafee2016power} showed that the  implicit social network data can be used to improve the \textit{recommendation accuracy} of  \emph{social recommender systems}. \cite{lin2014personalized} proposed a novel Personalized News Recommendation framework using implicit social experts. Their proposed methodology provides better recommendation accuracy specifically for \emph{cold\mbox{-}start users}. \cite{tuarob2015product} have developed a product feature inference model for mining implicit customer preferences within a large scale social media network. \cite{frey2011social} proposed a noble methodology for designing a  \emph{social market} by combining explicit and implicit social relationship. \cite{roth2010suggesting} proposed an interaction\mbox{-}based metric for measuring the affinity of a particular user of the network to other groups. For creating groups, they have used the user's implicit social graph. Their result demonstrates the importance of \emph{implicit social relationship} as well as \emph{affinity\mbox{-}based ranking}. \cite{ma2011learning} proposed a novel \textit{probabilistic factor analysis framework} which incorporates implicit social relationship for recommendation. After that, there are several works on improving \textit{recommendation accuracy} using implicit social relationship \cite{lin2012premise}, \cite{chen2012personalized}.  \cite{tasnadi2015supervised} proposed a methodology for solving \emph{link prediction problem} based on the implicit user information from the network. \cite{alsaleh2011improving} proposed a hybrid social matching system for recommendation using both user's both explicit as well as implicit relationship. Their result shows that the accuracy of the matching process increases if the implicit data is considered.

 To the best of the author's knowledge, there does not exist any study that constructs implicit social network from the user\mbox{-}item rating data. In this paper, we study two related problems in this direction.
\section{Preliminaries and Problem Definition} \label{Sec:PD}
In this section, we present some preliminary concepts related to this study and describe the implicit user network design problem, and implicit user network design with connectivity checking problem formally. In our study, all the graphs are \emph{simple}, \emph{finite}, and \emph{undirected}. A graph is symbolized as $G(V,E)$, where $V(G)$ and $E(G)$ are the set of vertices and edges of the graph, respectively. For any vertex $v$, we denote its set of neighbors $N(v)$ as $N(v)=\{u: (uv) \in E(G)\}$, and cardinality of the neighborhood is known as degree, i.e., $deg(v)=|N(v)|$. A pair of vertices $v_i$ and $v_j$ are adjacent to each other if the edge $(v_{i}v_{j})$ is present in $G$. A graph is said to be \emph{bipartite} if its vertex set can be partitioned into two parts such that no two vertices of the same part are adjacent to each other. A graph is said to be connected if between every pair of vertices there exists a path. If a graph is not connected then it consists of more than one connected components. Readers require more treatment on basic graph theory please refer to \cite{diestel2000graph}. Next, we define the user\mbox{-}item rating data.

\begin{mydef}[User\mbox{-}Item Rating Data]
This is a weighted bipartite graph $\mathcal{G}(U,I,\mathcal{E}, W)$, where $V(\mathcal{G})= U \cup I$, $U=\{u_1, u_2, \ldots, u_{n_1}\}$ are the set of users and $I=\{i_1, i_2, \ldots, i_{n_2}\}$ are the set of items present in the system. $(u_pi_q) \in E(\mathcal{G})$ if and only if user $u_p$ has rated the item $i_q$. $W$ is the edge weight function that assigns each edge to the corresponding rating value, i.e., $W: E(\mathcal{G}) \longrightarrow \mathbb{Z}^{+}$. 
\end{mydef}
In our study, we work with user\mbox{-}item rating datasets, where ratings are binary. We denote the number of users and items present in the system by $n_1$ and $n_2$, respectively. Traditionally, this data can be represented by a bi\mbox{-}adjacency matrix of size $n_1 \times n_2$, where $(m,n)$\mbox{-}th entry is $1$ if the user $u_m$ has rated the item $i_n$ and $0$ otherwise. However, the real\mbox{-}world rating datasets are represented as a collection of tuples of the form $(u_p, i_q,x)$, which means that the user $u_p$ has rated the item $i_q$ with the rating value $x$. As, we are working with binary rating datasets, the $x$ entry is missing. It is easy to observe that for any $u \in U$, $N(u) \subseteq I$, and for any $i \in I$, $N(i) \subseteq U$. For any positive integer $n$, let $[n]$ denotes the set $\{1, 2, \ldots, n\}$. In this paper, as we are dealing with two different graphs\footnote{In rest of the paper, the words `graph' and `network' has been used interchangeably.}, for the ease of clarity, we use the symbol of the graph as subscript for the neighborhood and degree. As an example, for any user $u$, $N_{G}(u)$ denotes the set of other users with which $u$ is directly connected in $G$ and $N_{\mathcal{G}}(u)$ denotes the set of items that the user $u$ has rated. Next, we define the implicit user network.
\begin{mydef}[Implicit User Network]
An implicit user network corresponding to a user\mbox{-}item rating data is basically an undirected, unweighted graph $G(V, E)$, where the vertex set of $G$ is the set of users present in $\mathcal{G}$ and there will be an edge between two users if they have at least one item, which is rated by both of them, i.e., $V(G)=U$, and for all $p,q \in [n_1]$ and $p \neq q$, $(u_pu_q) \in E(G)$ if and only if $N_{\mathcal{G}}(u_p) \cap N_{\mathcal{G}}(u_q) \neq \emptyset$.
\end{mydef} 
Figure \ref{Fig:Example} shows an example of a user\mbox{-}item rating data and its corresponding implicit user network.
\begin{figure}
\centering
\includegraphics[scale=0.8]{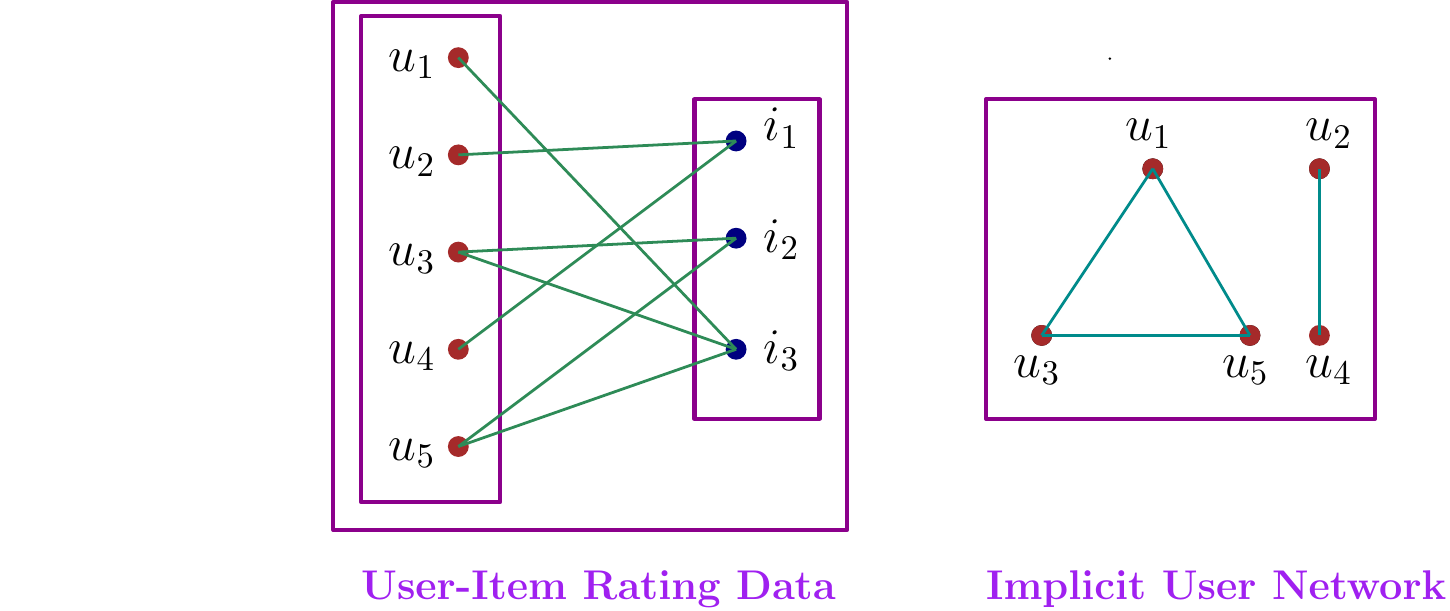} 
\caption{A toy example of user\mbox{-}item rating data and its implicit user network.}
\label{Fig:Example}
\end{figure}
 Next, we define both the problems that we have worked out in this paper.

\begin{tcolorbox}

\underline{\textsc{Implicit User Network Design}} \\
\textbf{Input:} The user\mbox{-}item rating data $\mathcal{G}(U,I,\mathcal{E})$.

\textbf{Problem:} Design the Implicit User Network $G(V,E)$, such that for all $u_p, u_q \in V(G)$, $(u_pu_q) \in E(G)$ if and only if $N_{\mathcal{G}}(u_p) \cap N_{\mathcal{G}}(u_q) \neq \emptyset$.
\end{tcolorbox}
\begin{tcolorbox}

\underline{\textsc{Implicit User Network Design With Connectivity Checking}} \\
\textbf{Input:} The user\mbox{-}item rating data $\mathcal{G}(U,I,\mathcal{E})$.

\textbf{Problem:} Design the Implicit User Network $G(V,E)$, such that for all $u_p, u_q \in V(G)$, $(u_pu_q) \in E(G)$ if and only if $N_{\mathcal{G}}(u_p) \cap N_{\mathcal{G}}(u_q) \neq \emptyset$, and obtain all the connected components $C_1, C_2, \ldots, C_k$ of $G$.

\end{tcolorbox}
Table \ref{tab:Symbols} contains the symbols and notations that are used in this study. Many of them has not been introduced yet. In the next section, the proposed algorithms for both the problems with detailed analysis have been described.

\begin{table}[h]
  \caption{Notations used in this study}
  \label{tab:Symbols}
  \begin{tabular}{||c|c||}
    \toprule
   Notation & Meaning \\
    \midrule
    $\mathcal{G}(U,I, \mathcal{E})$ & User\mbox{-}Item rating data \\
    $U$ & The set of users in $\mathcal{G}$ \\
    $I$ & The set of items in $\mathcal{G}$ \\
    $m$ & Number of edges in $\mathcal{G}$ \\
    $n_1$ & The number of users in $\mathcal{G}$, i.e., $|U|=n_1$\\
    $n_2$ & The number of items in $\mathcal{G}$, i.e., $|I|=n_2$\\
   $G(V,E)$ & The implicit user network among the users in $U$\\
   $V(G)$ & The set of vertices of $G$, i.e., $V(G)=U$\\
   $E(G)$ & The set of edges of $G$ \\
   $m^{'}$ & The number of edges of $G$, i.e., $m^{'}=|E(G)|$ \\
   $\mathcal{B}$ & The Bi\mbox{-}Adjacency matrix of $\mathcal{G}$ \\
   $\mathcal{B}[m,n]$  & $(m,n)$\mbox{-}th entry of $\mathcal{B}$ \\
   $\mathcal{A}$ & Adjacency Matrix of $G$\\
   $\mathcal{B}^{T}$ & Transpose of $\mathcal{B}$ \\
   $|X|$ &  Number of elements in $X$ \\
   $N_{\mathcal{G}}(u)$ &  Neighborhood of the node $u$ in $\mathcal{G}$ \\
   $deg_{\mathcal{G}}(u)$ &  Degree of the node $u$ in $\mathcal{G}$, i.e.,  $deg_{\mathcal{G}}(u)=|\mathcal{N}_{\mathcal{G}}(u)|$\\
   $\Delta_{I}$ &  Maximum degree among the nodes in $I$\\ 
   $\omega$ & Exponent for matrix multiplication\\
   $(u_p)_{\mathcal{G}} \leadsto ^{2}  (u_q)_{\mathcal{G}}$ & $u_p$ is reachable to $u_q$ by a path of length $2$ in $\mathcal{G}$\\
   \bottomrule
\end{tabular}
\end{table}

\section{Proposed Methodologies} \label{Sec:Proposed}
This section is broadly divided into two subsections containing the solution methodologies for the problems with detailed analysis.
\subsection{Solution Methodologies for the Implicit User Network Design Problem} \label{SubSec:31} 
Here, we present three different solution methodologies for the implicit user network design problem.
\subsubsection{Exhaustive Search Approach}
As its name suggests, in this method all the user pairs of $U$  exhaustively checks whether there exists a common item in $I$, which is rated by both the users of the pair. If there exists such an item, then the algorithm puts $1$ in the corresponding entry of the adjacency matrix $\mathcal{A}$ of $G$. Algorithm \ref{Brufo} formally describes the procedure for designing the implicit user network from the user\mbox{-}item rating data.
\begin{algorithm}[h]
	\KwData{User\mbox{-}Item rating data as  Bi-adjacency Matrix ($\mathcal{B}$).}
	\KwResult{ Adjacency Matrix ($\mathcal{A}$) of $G$.}
	$n_1 \leftarrow \text{Number of Rows of }\mathcal{B}$\;
	$n_2 \leftarrow \text{Number of Columns of }\mathcal{B}$\;
	$\text{Create the Matrix }A \text{ of size } n_{1} \times n_{1} \text{and intilialized with }0$\;
	\For{$x=1 $ to $n_1$}{
		\For{$y=x+1 $ to $n_1$}{
			\For{$z=1 $ to $n_2$}{
				\eIf{$B[x][z]==1 \&\& B[y][z]==1$}{
					$\mathcal{A}[x][y] \longleftarrow 1$\;
					$\mathcal{A}[y][x] \longleftarrow 1$\;
					break\;
				}{
				$\mathcal{A}[x][y] \longleftarrow 0$\;
				$\mathcal{A}[y][x] \longleftarrow 0$\;
			}
		}
	}
}
\caption{Exhaustive Search Approach}
\label{Brufo}
\end{algorithm}

Now, we analyze the time and space requirement of Algorithm \ref{Brufo}. As there are $n_1$ users in $\mathcal{G}$, hence the maximum number of possible user pairs could be $\binom{n_1}{2}=\mathcal{O}(n_1^{2})$. Now, for each of the user pair, we need to check whether there exists a common item or not. Hence, for each user pair time requirement is of $\mathcal{O}(n_2)$. So, the total time requirement is of $\mathcal{O}(n_1^{2}n_2)$. Extra space consumed by  Algorithm \ref{Brufo} is to store the adjacency matrix of $G$, which is of $\mathcal{O}(n_1^{2})$. Hence, Theorem \ref{Th:Run_Algo_1} holds.

\begin{mytheo} \label{Th:Run_Algo_1}
Running time and space requirement of Algorithm \ref{Th:Run_Algo_1} is of $\mathcal{O}(n_1^{2}n_2)$ and $\mathcal{O}(n_1^{2})$, respectively.
\end{mytheo}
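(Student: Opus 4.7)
The plan is to bound the runtime by directly counting the cost of the three nested loops of Algorithm \ref{Brufo}, and to bound the auxiliary space by enumerating the non-trivial data structures that the algorithm allocates on top of its input. The argument should be short since the algorithm is a plain triple loop with constant-time work at the deepest level.

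First I would observe that the outer two loops range over all ordered pairs $(x,y)$ with $1 \leq x < y \leq n_1$, which contributes exactly $\binom{n_1}{2}$ iterations. For a fixed pair, the innermost loop scans $z$ from $1$ to $n_2$, performing $\mathcal{O}(1)$ work per step: two table lookups $\mathcal{B}[x][z]$, $\mathcal{B}[y][z]$, a conjunction, and, depending on the branch, two constant-time writes to $\mathcal{A}$. In the worst case no common item is found (or it is found only at $z = n_2$), so the inner loop runs for the full $n_2$ iterations. Multiplying the counts yields a worst-case runtime of $\binom{n_1}{2} \cdot n_2 = \mathcal{O}(n_1^2 n_2)$, as claimed.

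Next, for the space bound, I would enumerate all storage allocated by Algorithm \ref{Brufo} beyond the input $\mathcal{B}$. The only non-scalar object explicitly created is the $n_1 \times n_1$ matrix $\mathcal{A}$, which accounts for $\mathcal{O}(n_1^2)$ cells. The loop indices $x,y,z$ and the two scalars $n_1, n_2$ contribute only $\mathcal{O}(1)$ additional space. Summing gives the stated $\mathcal{O}(n_1^2)$ auxiliary-space bound.

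The analysis presents essentially no obstacle; the only point worth a brief justification is that the \textbf{break} in the \textbf{if} branch can terminate the inner loop early, but since this can only reduce the number of iterations executed, it does not affect the worst-case count and hence the upper bound remains $\mathcal{O}(n_1^2 n_2)$. Both bounds are therefore immediate from the loop structure.
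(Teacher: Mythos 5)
Your proposal is correct and follows essentially the same argument as the paper: the outer two loops contribute $\binom{n_1}{2}=\mathcal{O}(n_1^2)$ user pairs, each pair costs $\mathcal{O}(n_2)$ for the inner scan over items, and the only super-constant auxiliary storage is the $n_1\times n_1$ adjacency matrix $\mathcal{A}$. Your added remark that the \textbf{break} can only shorten the inner loop, and hence does not affect the worst-case upper bound, is a small point the paper leaves implicit but changes nothing substantive.
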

\subsubsection{Clique Addition Approach}
This methodology works based on the principle stated in Lemma \ref{Lemma:1}.

\begin{mylemma} \label{Lemma:1}
	Let $\mathcal{G}(U,I,\mathcal{E})$ be the user\mbox{-}item rating data, then $\forall i_q \in I$, $N_{\mathcal{G}}(i_q)$ will be a clique in $G$.
\end{mylemma}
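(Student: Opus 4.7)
The plan is to unfold the two relevant definitions and observe that they combine immediately, so the proof is essentially a bookkeeping verification rather than an argument with a substantive obstacle. Fix an arbitrary item $i_q \in I$ and let $S \eqdef N_{\mathcal{G}}(i_q) \subseteq U$ denote the set of users who have rated $i_q$. The goal is to show that every pair of distinct vertices in $S$ is adjacent in the implicit user network $G$, which is exactly the definition of $S$ being a clique.

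To do this, I would pick arbitrary distinct $u_p, u_r \in S$ and chase the definitions. Since $\mathcal{G}$ is an undirected bipartite graph, the membership $u_p \in N_{\mathcal{G}}(i_q)$ is equivalent to $i_q \in N_{\mathcal{G}}(u_p)$, and similarly $i_q \in N_{\mathcal{G}}(u_r)$. This immediately gives $i_q \in N_{\mathcal{G}}(u_p) \cap N_{\mathcal{G}}(u_r)$, so this intersection is non-empty. By the defining condition of the implicit user network (two users are joined by an edge exactly when they share at least one commonly rated item), we conclude $(u_p u_r) \in E(G)$.

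Since the pair was arbitrary, every two distinct vertices of $S$ are adjacent in $G$, which is the clique property. There is really no hard step here; the only minor thing to flag is the degenerate case $|N_{\mathcal{G}}(i_q)| \le 1$, in which the claim holds vacuously and should be mentioned for completeness. A useful remark to add after the proof is the converse direction that drives the algorithmic use in the next subsection: \emph{every} edge of $G$ arises in this way, because if $(u_pu_r) \in E(G)$ then any witness $i_q \in N_{\mathcal{G}}(u_p) \cap N_{\mathcal{G}}(u_r)$ places both endpoints inside the clique $N_{\mathcal{G}}(i_q)$. Hence $E(G) = \bigcup_{i_q \in I}\binom{N_{\mathcal{G}}(i_q)}{2}$, which motivates building $G$ by iterating over items and pasting in the corresponding cliques.
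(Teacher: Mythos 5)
Your proof is correct and follows essentially the same route as the paper's: for any two distinct users in $N_{\mathcal{G}}(i_q)$, the item $i_q$ itself witnesses a non-empty intersection of their item neighborhoods, so they are adjacent in $G$. The extra remarks (the vacuous case $|N_{\mathcal{G}}(i_q)| \le 1$ and the converse identity $E(G) = \bigcup_{i_q \in I}\binom{N_{\mathcal{G}}(i_q)}{2}$) are sound additions but not part of the paper's argument.
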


\begin{proof}
It has been mentioned previously, for any bipartite graph $\mathcal{G}(U,I,\mathcal{E})$, $\forall i_q \in I$, $N(i_q) \subseteq U$. Now, for any two users $u_x,u_y \in N(i_q)$ for some $q \in [n_2]$, they have always $i_q$ as a common item,and hence, $i_q \in N(u_p) \cap N(u_q)$. This holds for every user pairs of $N(i_q)$. Hence, $\forall i_q \in I$, $N(i_q)$ of $\mathcal{G}$ will be a clique in $G$. 
\end{proof}
As an example, it can be observed from Figure \ref{Fig:Example} that $N_{\mathcal{G}}(i_3)=\{u_1,u_3,u_5\}$ and this is a clique in $G$. Based on the clique addition approach, the implicit user network can be constructed by the following way. Starting with an empty graph where the users in $U$ are the vertices, just add the cliques $N(i_q)$, $\forall i_q \in I$. Algorithm \ref{Algo:Clique} performs this task.

\begin{algorithm}[h]
	\KwData{User\mbox{-}Item rating data as  Bi-adjacency Matrix ($\mathcal{B}$).}
	\KwResult{ Adjacency Matrix ($\mathcal{A}$) of $G$.}
	$n_1 \leftarrow \text{Number of Rows of }\mathcal{B}$\;
	$n_2 \leftarrow \text{Number of Columns of }\mathcal{B}$\;
	$\text{Create the Matrix }\mathcal{A} \text{ of size } n_{1} \times n_{1} \text{and intilialized with }0$\;
	\For{$\text{every item } i \in I$}{
		$N \longleftarrow \emptyset$\;
		\For{$\text{every user } u \in U$}{
			\If{$\mathcal{B}[u][i]==1$}{
				$N \longleftarrow N \cup \{u\}$\;
			}
		}
		\If{$|N| > 1$}{
			\For{$x=1 $ to $|N|$}{
				$a=N[x]$\;
				\For{$y=x+1 $ to $|N|$}{
					$b=N[y]$\;
					$\mathcal{A}[a][b] \longleftarrow 1$\;
					$\mathcal{A}[b][a] \longleftarrow 1$\;
				}
			}
		}
	}
	\caption{Clique Addition Approach}
	\label{Algo:Clique}
\end{algorithm}

Now, we analyze Algorithm \ref{Algo:Clique} to understand its time and space requirements. Let, $\Delta_{I}$ be the maximum degree among the vertices of $I$. Hence, the size of each clique in $G$ due to each item could be as much as $\Delta_{I}$. Starting with an empty graph adding each clique in $G$ requires $\binom{\Delta_{I}}{2}=\mathcal{O}(\Delta_{I}^{2})$ time. As the number of items in the user\mbox{-}item rating data are $n_2$, hence the running time of Algorithm \ref{Algo:Clique} will be $\mathcal{O}(\Delta_{I}^{2} n_2)$. Additional space consumed by Algorithm \ref{Algo:Clique} is due to storing the users that rate the item (refer to Line No. $8$ of Algorithm \ref{Algo:Clique}) which takes $\mathcal{O}(\Delta_{I})$ space and storing the adjacency matrix of $G$, which is of $\mathcal{O}(n_1^{2})$. Hence, the total space requirement of Algorithm \ref{Algo:Clique} is of $\mathcal{O}(n_1^{2}+\Delta_{I}) = \mathcal{O}(n_1^{2})$. Hence, Theorem \ref{Th:Run_Algo_2} holds.

\begin{mytheo} \label{Th:Run_Algo_2}
Running time and space requirement of Algorithm \ref{Th:Run_Algo_2} is of $\mathcal{O}(\Delta_{I}^{2} n_2)$ and $\mathcal{O}(n_1^{2})$, respectively.
\end{mytheo}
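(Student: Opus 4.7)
The plan is to separately bound the running time and the additional space used by Algorithm \ref{Algo:Clique}, treating each iteration of the outer loop (which processes a single item $i \in I$) as the basic unit of work and then summing over all $n_2$ items.

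First I would analyze the time spent inside one outer iteration. Two things happen there: (i) the set $N = N_{\mathcal{G}}(i)$ of users who rated $i$ is assembled, and (ii) the inner double loop on $x,y$ writes the clique on $N$ into $\mathcal{A}$. For (i), by the definition of $\Delta_{I}$ we have $|N| \leq \Delta_{I}$, and the users in $N$ can be enumerated in $O(\Delta_{I})$ time. For (ii), the number of entries of $\mathcal{A}$ written is at most $2\binom{|N|}{2} = O(\Delta_{I}^{2})$. Summing both contributions over the $n_2$ outer iterations yields the claimed total running time of $O(\Delta_{I}^{2} n_2)$.

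Next I would account for space. The only persistent output structure is the adjacency matrix $\mathcal{A}$ of dimensions $n_1 \times n_1$, which requires $\Theta(n_1^{2})$ cells. The auxiliary set $N$ is rebuilt in each outer iteration and has size at most $\Delta_{I} \leq n_1$, so it never dominates. Therefore the total additional space is $O(n_1^{2} + \Delta_{I}) = O(n_1^{2})$, giving the second claim.

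The main subtlety I foresee is in step (i): if one reads lines 6--9 of the pseudocode literally, the inner loop scans \emph{every} user and consults $\mathcal{B}$, which contributes $O(n_1)$ per item and would inflate the bound to $O((n_1 + \Delta_{I}^{2})\, n_2)$. To justify the stated $O(\Delta_{I}^{2} n_2)$ bound I would make explicit the standard assumption, consistent with the paper's earlier remark that real-world rating data is represented as a collection of tuples $(u_p, i_q, x)$, that the users rating a given item can be enumerated directly through a sparse (adjacency-list) representation of $\mathcal{G}$. Under this convention the construction of $N$ costs $O(\Delta_{I})$ rather than $O(n_1)$ per item, and both bounds of Theorem \ref{Th:Run_Algo_2} follow.
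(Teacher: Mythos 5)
Your proof is correct and takes essentially the same route as the paper: the per-item clique insertion costs $\mathcal{O}(\Delta_{I}^{2})$, summed over the $n_2$ items, and the space is dominated by the $n_1 \times n_1$ adjacency matrix $\mathcal{A}$ plus the $\mathcal{O}(\Delta_{I})$ auxiliary list $N$. Your caveat about the literal reading of the inner user-scan in Algorithm \ref{Algo:Clique} is well taken --- the paper silently ignores the $\mathcal{O}(n_1 n_2)$ cost of building $N$ from the bi-adjacency matrix, so your explicit sparse-representation assumption actually closes a small gap that the paper's own analysis leaves open.
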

Now, it is important to observe that in the worst case $\Delta_{I}$ could be $\mathcal{O}(n_1)$. If for all $i_n \in I$, $deg_{\mathcal{G}}(i_n)=\mathcal{O}(n_1)$, then the running time of Algorithm \ref{Algo:Clique} will be $\mathcal{O}(n_1^{2} n_2)$, which is no better than that of Algorithm \ref{Brufo}. Practically this will be the case when all items are popular items (i.e., rated by many users). However, in reality, rating data are extremely sparse \cite{grvcar2005data}. This means there will be very few items that are rated by many users and the majority of the items are rated by only a few users. In this situation, Algorithm \ref{Algo:Clique} should take less computational time compared to Algorithm \ref{Brufo} and this is exactly what we have observed in our experimentation described in Section \ref{Sec:Experimental}.
\subsubsection{Matrix Multiplication Method}
The intuition behind this method is that if $\mathbb{A}$ be the adjacency matrix of any undirected, unweighted graph then the $(x,y)$\mbox{-}th cell of $\mathbb{A}^{k}$ denotes the length $k$ paths between the vertex $u_x$ and $u_y$ in that graph. Lemma \ref{Lemma:2} describes the fact in this problem context.

\begin{mylemma} \label{Lemma:2}
Let $\mathcal{G}(U,I,\mathcal{E})$ be a user\mbox{-}item rating data and $G(V, E)$ be the designed implicit user network. Now, given any pair of users $u_x$ and $u_y$ of $U$, $(u_xu_y) \in E(G)$ if and only if, they have at least one 2 length path in $\mathcal{G}$. Mathematically, 
	\begin{center}
		$(u_xu_y) \in E(G) \Leftrightarrow (u_x)_{\mathcal{G}} \leadsto ^{2}  (u_y)_{\mathcal{G}}$
	\end{center}
\end{mylemma}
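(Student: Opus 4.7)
The plan is to prove both directions of the biconditional directly from the definition of the implicit user network and the bipartite structure of $\mathcal{G}$. Since $\mathcal{G}(U,I,\mathcal{E})$ is bipartite with parts $U$ and $I$, any path of length exactly $2$ starting at a vertex of $U$ must end at a vertex of $U$, with its middle vertex lying in $I$. This single structural observation carries essentially all of the argument.

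For the forward direction, I would assume $(u_x u_y) \in E(G)$ and unfold the definition of the implicit user network, which tells me that $N_{\mathcal{G}}(u_x) \cap N_{\mathcal{G}}(u_y) \neq \emptyset$. Picking any item $i_q$ in this intersection immediately produces the path $u_x - i_q - u_y$ in $\mathcal{G}$, exhibiting a length-$2$ path between $u_x$ and $u_y$.

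For the backward direction, I would start with a length-$2$ path $u_x - w - u_y$ in $\mathcal{G}$ and use bipartiteness to conclude $w \in I$; call this vertex $i_q$. Then $i_q$ is adjacent in $\mathcal{G}$ to both $u_x$ and $u_y$, so $i_q \in N_{\mathcal{G}}(u_x) \cap N_{\mathcal{G}}(u_y)$, which forces $N_{\mathcal{G}}(u_x) \cap N_{\mathcal{G}}(u_y) \neq \emptyset$, and hence $(u_x u_y) \in E(G)$ by definition.

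There is no real obstacle here; the only mild subtlety is making sure the reader sees that the middle vertex of the length-$2$ path is necessarily an item, which is precisely where bipartiteness is used. After the lemma is established, it is a short step to the matrix-multiplication-based algorithm: the number of length-$2$ walks from $u_x$ to $u_y$ in $\mathcal{G}$ is the $(x,y)$-entry of $\mathcal{B}\mathcal{B}^{T}$, so $\mathcal{A}$ can be obtained by thresholding this product at $1$, giving an $\mathcal{O}(n_1^{\omega})$-type bound that I would state as a corollary in the subsequent discussion.
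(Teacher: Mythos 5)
Your proposal is correct and follows essentially the same argument as the paper: both directions unfold the definition of $E(G)$ via a common item $i_q \in N_{\mathcal{G}}(u_x) \cap N_{\mathcal{G}}(u_y)$ serving as the middle vertex of the length-$2$ path, with bipartiteness invoked in the reverse direction to force that middle vertex into $I$. No gaps; your explicit emphasis on where bipartiteness is used is, if anything, slightly cleaner than the paper's wording.
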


\begin{proof}
As this is an `if and only if' statement, we have to prove both the directions. First, we prove the forward direction. Assume that there exists an edge between $u_p$ and $u_q$ in $G$. This essentially means that there exists minimum one common item in $N_{\mathcal{G}}(u_x) \cap N_{\mathcal{G}}(u_y)$. Without loss of generality, assume that the common item is $i_{n}$, $i_n \in N_{\mathcal{G}}(u_x) \cap N_{\mathcal{G}}(u_y)$. Hence, both the edges $(u_xi_n)$ and $(i_nu_y)$  will be present in $E(\mathcal{G})$. This clearly implies that $u_x$ and $u_y$ are reachable using the path $\langle u_x i_n u_y \rangle$ and the length of this path is two. This necessarily shows that if $(u_xu_y) \in E(G)$ then this implies that $u_x$ and $u_y$ are connected by minimum one path of length $2$ in $\mathcal{G}$. 
\par For the reverse direction, assume that there exists a length $2$ path between $u_x$ and $u_y$ in $\mathcal{G}$. As $\mathcal{G}$ is bipartite, hence there must exist a vertex $i_n$ in $I$ such that $\langle u_x i_n u_y \rangle$ is a path of length $2$. This clearly implies that $i_n \in N_{\mathcal{G}}(u_x) \cap N_{\mathcal{G}}(u_y)$. Hence by definition of implicit social network $(u_xu_y) \in E(G)$. This completes the proof.
\end{proof}

Now, we report another interesting observation in Lemma \ref{Lemma:3}, which relates $\mathcal{B}.\mathcal{B}^{T}$ with $\mathcal{G}$.

\begin{mylemma} \label{Lemma:3}
	Let $\mathcal{G}(U, I, E)$ be the user\mbox{-}item rating data and $\mathcal{B}$ be its bi\mbox{-}adjacency matrix. $(\mathcal{B}\mathcal{B}^{T})[m,m]$ denotes the $(m,m)$\mbox{-}th entry of $\mathcal{B}\mathcal{B}^{T}$. Then $\forall m \in [n_1]$, $(\mathcal{B}\mathcal{B}^{T})[m,m]=deg_{\mathcal{G}}(u_m)$.
\end{mylemma}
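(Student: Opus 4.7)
The plan is to prove this by directly expanding the definition of matrix multiplication applied to the diagonal entry $(m,m)$, and then exploiting the binary nature of the bi-adjacency matrix $\mathcal{B}$.

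First, I would write out $(\mathcal{B}\mathcal{B}^{T})[m,m]$ as an inner product of the $m$-th row of $\mathcal{B}$ with itself. By definition of matrix multiplication,
\begin{equation*}
(\mathcal{B}\mathcal{B}^{T})[m,m] \;=\; \sum_{k=1}^{n_2} \mathcal{B}[m,k]\cdot \mathcal{B}^{T}[k,m] \;=\; \sum_{k=1}^{n_2} \mathcal{B}[m,k]\cdot \mathcal{B}[m,k] \;=\; \sum_{k=1}^{n_2} \bigl(\mathcal{B}[m,k]\bigr)^{2}.
\end{equation*}

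Next I would use the fact that we work with binary ratings, so $\mathcal{B}[m,k]\in\{0,1\}$ for every $m\in[n_1]$ and $k\in[n_2]$. Hence $\bigl(\mathcal{B}[m,k]\bigr)^{2} = \mathcal{B}[m,k]$, which collapses the sum to $\sum_{k=1}^{n_2} \mathcal{B}[m,k]$. By definition of the bi-adjacency matrix of $\mathcal{G}$, $\mathcal{B}[m,k]=1$ if and only if user $u_m$ has rated item $i_k$, i.e.\ iff $i_k\in N_{\mathcal{G}}(u_m)$. Therefore the sum counts exactly the items rated by $u_m$, which is $|N_{\mathcal{G}}(u_m)| = deg_{\mathcal{G}}(u_m)$.

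Combining the two equalities gives $(\mathcal{B}\mathcal{B}^{T})[m,m]=deg_{\mathcal{G}}(u_m)$ for every $m\in[n_1]$, completing the proof. I do not anticipate any real obstacle here: the statement is a standard consequence of the way an incidence/bi-adjacency matrix interacts with its transpose, and the only subtlety worth mentioning explicitly is the use of binary ratings to replace $\mathcal{B}[m,k]^{2}$ by $\mathcal{B}[m,k]$, which is why the claim would fail verbatim for a weighted rating matrix.
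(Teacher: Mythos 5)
Your proof is correct, but it takes a different route from the paper. You expand the diagonal entry directly as the inner product of the $m$-th row of $\mathcal{B}$ with itself, $\sum_{k}\mathcal{B}[m,k]^{2}$, and then use the binary nature of the entries to collapse the sum to a count of the items rated by $u_m$. The paper instead argues combinatorially: it invokes the standard fact that the $(p,q)$-th entry of $\mathcal{B}\mathcal{B}^{T}$ counts the length-$2$ walks between $u_p$ and $u_q$, and then observes that a closed length-$2$ walk at $u_m$ must traverse a single incident edge out to some item and back, so the number of such walks equals $deg_{\mathcal{G}}(u_m)$. The two arguments are essentially dual views of the same computation, but yours is the more self-contained and rigorous of the two: it does not rely on the walk-counting fact as a black box, and it makes explicit the hypothesis (binary entries) on which the identity actually depends --- a hypothesis the paper's walk-counting interpretation also silently requires. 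Your closing remark that the claim fails verbatim for a weighted rating matrix is a worthwhile observation that the paper does not make.
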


\begin{proof}
It has been mentioned before that given a bipartite graph $\mathcal{G}(U, I, E)$ represented as bi\mbox{-}adjacency matrix $\mathcal{B}$ the $(p,q)$-th entry of $\mathcal{B}\mathcal{B}^{T}$ signifies the number of 2 length paths between the vertices $u_p$ and $u_q$ $\forall m,n \in [n_1]$. Hence, in case of $(m,m)$\mbox{-}th entry, this is basically the number of two length paths starting and ending at $u_m$. Now for any vertex $u_m \in V(\mathcal{G})$ if we make traversal of length $2$ from $u_m$ to $u_m$ itself, then one edge incident to $u_m$ will be traversed two times, and the path stars from $u_m$ then goes to some $i_x \in N_{\mathcal{G}}(u_m)$ then again come back to $u_m$. This implies that such 2 length traversal possible is equal to the number of edges incident on $u_m$ and this is same as the degree of $u_m$. Hence, the following relation holds: $\forall m \in [n_1]$, $(\mathcal{B}\mathcal{B}^{T})[m,m]=deg_{\mathcal{G}}(u_m)$. This completes the proof.
\end{proof}

\begin{algorithm}[h]
	\KwData{User\mbox{-}Item rating data as  Bi-adjacency Matrix ($\mathcal{B}$).}
	\KwResult{Adjacency Matrix ($\mathcal{A}$) of $G$.}
	$n_1 \leftarrow \text{Number of Rows of }\mathcal{B}$\;
	$n_2 \leftarrow \text{Number of Columns of }\mathcal{B}$\;
	$\text{Create the matrix } C \text{ of size } n_2 \times n_1 \text{ and initialized with } 0$\;
	$\text{Create the Matrix }\mathcal{A} \text{ of size } n_{1} \times n_{1} \text{and intilialized with }0$\;
	\For{$\text{every user }u \in U$}{
		\For{$\text{every user }i \in I$}{
			$C[i][u]=\mathcal{B}[u][i]$\;
		}
	}
	$A \longleftarrow \text{Multiply the matrices }\mathcal{B} \text{ and } C$\;
	\For{$x=1 $ to $n_1$}{
		\For{$y=1 $ to $n_1$}{
			\If{$x==y$}{
				$\mathcal{A}[x][y] \longleftarrow 0$\;
			}
		}
	}
	\For{$x=1 $ to $n_1$}{
		\For{$y=1 $ to $n_1$}{
			\If{$\mathcal{A}[x][y]>1$}{
				$\mathcal{A}[x][y] \longleftarrow 1$\;
			}
		}
	}
	\caption{Matrix Multiplication\mbox{-}Based Approach}
	\label{Algo:Matmul}
\end{algorithm}
Algorithm \ref{Algo:Matmul} designs the implicit user network based on the matrix multiplication\mbox{-}based approach, whose working principle is as follows. For the given user\mbox{-}item rating data as a bi\mbox{-}adjacency matrix, first, it computes its transpose $\mathcal{B}^{T}$ (Line No. $5$ to $7$). The time requirement for this step is of $\mathcal{O}(n_1n_2)$. Next, it performs the matrix multiplication between $\mathcal{B}$ and $\mathcal{B}^{T}$ (Line No. $8$). Complexity issues of this step is discussed little later. Let, $A$ be the obtained matrix, which is of size $n_1 \times n_1$. Finally, we change the principal diagonal elements of $A$ to $0$, and the other elements which are greater than $1$ to $1$ (Line No. $9$ to $12$, and $13$ to $16$, respectively). Computational time requirement  of this step is of $\mathcal{O}(n_1^{2})$. If, the naive matrix multiplication technique is applied to multiply $\mathcal{B}$ (with  dimension $n_1 \times n_2$) and $\mathcal{B}^{T}$ (with dimension $n_2 \times n_1$), then the computational time requirement of this step will be of $\mathcal{O}(n_1^{2}n_2)$. In that case, the running time of this algorithm will be of $\mathcal{O}(n_1n_2 + n_1^{2}n_2+ n_1^{2})=\mathcal{O}( n_1^{2}n_2)$, which is no better than Algorithm \ref{Brufo}. However, there exist faster rectangular matrix multiplication Algorithms \cite{blaser2013fast}. One of them is due to \cite{le2012faster} and it has been shown that two rectangular matrices of size $n_1 \times \floor{n_1^{k}}$ and $\floor{n_1^{k}} \times n_1$ with $k \leq 0.30298$ can be multiplied with $\mathcal{O}(n_1^{2+ o(1)})$. This $2+o(1)$ is represented as $\omega$ and referred to as the \emph{matrix multiplication exponent}, where $2 \leq \omega \leq 2.374$ \cite{chiantini2018polynomials}. Hence, the running time of Algorithm \ref{Algo:Matmul} is $\mathcal{O}(n_1n_2 + n_1^{\omega}+ n_1^{2})=\mathcal{O}( n_1^{\omega})$. Additional space consumed by  Algorithm \ref{Algo:Matmul} is due to storing the $\mathcal{B}^{T}$, which requires $\mathcal{O}(n_1n_2)$ space, and $A$, which requires $\mathcal{O}(n_1^{2})$ space. Hence, the total space required by Algorithm \ref{Algo:Matmul} is $\mathcal{O}(n_1n_2+n_1^{2})=\mathcal{O}(n_1(n_1+n_2))$. Hence, Theorem \ref{Th:Run_Algo_3} holds.

\begin{mytheo} \label{Th:Run_Algo_3}
Algorithm \ref{Th:Run_Algo_3} can be implemented with $\mathcal{O}(n_1^{\omega})$ time, and $\mathcal{O}(n_1(n_1+n_2))$ time and space requirement respectively, where $\omega$ is the exponent for the rectangular matrix multiplication.
\end{mytheo}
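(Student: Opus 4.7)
The plan is to follow the three phases of Algorithm~\ref{Algo:Matmul} in turn, bounding each phase and using Lemmas~\ref{Lemma:2} and~\ref{Lemma:3} to certify that the final matrix is indeed the adjacency matrix $\mathcal{A}$ of $G$. First, I would argue that the transposition in Lines~5--7 constructs $\mathcal{B}^{T}$ in $\mathcal{O}(n_1 n_2)$ time by a single sweep through $\mathcal{B}$. Second, the product $\mathcal{B}\mathcal{B}^{T}$ computed in Line~8 produces a matrix whose $(x,y)$-th entry counts the number of length-two walks between $u_x$ and $u_y$ in $\mathcal{G}$; by Lemma~\ref{Lemma:2} the off-diagonal entries are strictly positive exactly when $(u_xu_y) \in E(G)$, and by Lemma~\ref{Lemma:3} the diagonal entries equal $deg_{\mathcal{G}}(u_m)$ and must be overwritten. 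The two passes in Lines~9--12 and 13--16 perform precisely these corrections in $\mathcal{O}(n_1^{2})$ time, yielding $\mathcal{A}$.

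For the running time, the cost is dominated by the rectangular matrix product. Since $\mathcal{B}$ has dimensions $n_1 \times n_2$ and $\mathcal{B}^{T}$ has dimensions $n_2 \times n_1$, I would invoke the fast rectangular matrix multiplication bound cited in the excerpt (due to~\cite{le2012faster}) to obtain $\mathcal{O}(n_1^{\omega})$ time, where $\omega$ denotes the rectangular matrix multiplication exponent. Adding the three contributions gives $\mathcal{O}(n_1 n_2 + n_1^{\omega} + n_1^{2}) = \mathcal{O}(n_1^{\omega})$, as $n_1^{\omega} \geq n_1^{2}$ and the transposition cost is subsumed. For the space bound, the only auxiliary structures allocated are the transpose $\mathcal{B}^{T}$ of size $n_1 \times n_2$ and the product matrix $A$ of size $n_1 \times n_1$, contributing $\mathcal{O}(n_1 n_2)$ and $\mathcal{O}(n_1^{2})$ respectively; their sum is $\mathcal{O}(n_1(n_1+n_2))$, matching the claim.

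The step I expect to need the most care is the invocation of the rectangular matrix multiplication bound. The result from~\cite{le2012faster} quoted in the excerpt is stated for a specific aspect ratio, namely inner dimension of order $n_1^{0.30298}$, so if $n_2$ is substantially larger than this critical value one must partition $\mathcal{B}$ vertically into blocks of the appropriate width, apply the bound block by block, and sum the per-block costs to recover a uniform $\mathcal{O}(n_1^{\omega})$ bound; if $n_2$ is smaller, the bound holds directly by padding. Making this partitioning/padding argument explicit is the only non-routine ingredient; everything else is a direct accounting of the costs already computed line by line.
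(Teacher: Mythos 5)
Your line-by-line accounting is the paper's own argument almost verbatim: the same three-phase decomposition (transpose in $\mathcal{O}(n_1 n_2)$, one rectangular product, two $\mathcal{O}(n_1^{2})$ clean-up passes whose correctness rests on Lemmas~\ref{Lemma:2} and~\ref{Lemma:3}), the same appeal to fast rectangular matrix multiplication for Line~8, and the same space tally $\mathcal{O}(n_1 n_2 + n_1^{2}) = \mathcal{O}(n_1(n_1+n_2))$.

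The one place you go beyond the paper is the aspect-ratio discussion, and that is precisely where your argument fails. If $n_2 > n_1^{0.30298}$ and you split $\mathcal{B}$ into $\ceil{n_2/\floor{n_1^{k}}}$ vertical blocks of width $\floor{n_1^{k}}$, each block product costs $n_1^{2+o(1)}$, so the sum of per-block costs is $\mathcal{O}\bigl((n_2/n_1^{k})\, n_1^{2+o(1)}\bigr)$; blocking multiplies the cost by the number of blocks, it does not average it away, and the result is $\mathcal{O}(n_1^{\omega})$ only when $n_2 = \mathcal{O}(n_1^{0.30298+o(1)})$. That condition fails badly on the paper's own data (Epinions has $n_2 \approx 1.4\times 10^{5}$ while $n_1^{0.30298} \approx 25$). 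To be fair, the paper has the identical problem and simply does not acknowledge it --- it quotes the bound of \cite{le2012faster}, which is stated only for inner dimension at most $\floor{n_1^{k}}$ with $k \leq 0.30298$, and applies it to an $n_1 \times n_2$ product with no restriction on $n_2$ --- so your instinct that this step is the delicate one is correct, but the repair you sketch does not deliver the claimed bound. An honest statement needs either the hypothesis $n_2 = \mathcal{O}(n_1^{0.30298})$ (to get exponent $2+o(1)$), or $n_2 = \mathcal{O}(n_1)$ with padding to a square product and $\omega \leq 2.374$; for unrestricted $n_2$ the multiplication step costs $\mathcal{O}(\ceil{n_2/n_1}\, n_1^{\omega})$ and the theorem should carry that caveat.
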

The following example demonstrates the working principle of Algorithm \ref{Algo:Matmul}. If, we multiply the bi\mbox{-}adjacency matrix ($\mathcal{B}$) of $G$ with its transpose ($\mathcal{B}^{T}$) we have the following
\begin{center}
	$\mathcal{B}\mathcal{B}^{T}= 
	\begin{bmatrix}
	0 & 0 & 1 \\
	1 & 0 & 0 \\
	0 & 1 & 1 \\
	1 & 0 & 0 \\
	0 & 1 & 1 
	\end{bmatrix}
	\times 
	\begin{bmatrix}
	0 & 1 & 0 & 1 & 0 \\
	0 & 0 & 1 & 0 & 1 \\
	1 & 0 & 1 & 0 & 1 
	\end{bmatrix}
	$
	\hspace{0.2 cm}
	$\Rightarrow \mathcal{B}\mathcal{B}^{T}= 
	\begin{bmatrix}
	1 & 0 & 1 & 0 & 1 \\
	0 & 1 & 0 & 1 & 0 \\
	1 & 0 & 2 & 0 & 2 \\
	0 & 1 & 0 & 1 & 0 \\
	1 & 0 & 2 & 0 & 2
	\end{bmatrix}
	$ 
\end{center}
As an example, it can be verified from Figure \ref{Fig:Example}, that degree of $u_1$, $u_2$, $u_4$ are $1$, and $u_3$, $u_5$ have degree $2$ in $\mathcal{G}$. Now, putting 0 to all the principal diagonal entries and all other entries that are greater than one to one of $\mathcal{B}\mathcal{B}^{T}$ we have
\begin{center}
	$C= 
	\begin{bmatrix}
	0 & 0 & 1 & 0 & 1 \\
	0 & 0 & 0 & 1 & 0 \\
	1 & 0 & 0 & 0 & 1 \\
	0 & 1 & 0 & 0 & 0 \\
	1 & 0 & 1 & 0 & 0   
	\end{bmatrix}
	$ 
\end{center} 
Now, it can be easily verified that the matrix $C$ is same as the adjacency matrix ($A$) of the implicit user network $G(V, E)$ that has been shown in Figure \ref{Fig:Example}.
\subsection{Implicit User Network Design with Connectivity Checking}
As mentioned previously the connectivity information among the users of the implicit user network is important for different commercial applications by the E\mbox{-}Commerce house which includes viral marketing, computational advertisement, and so on. Here, we address this issue by the following two methods.

\subsubsection{Method 1 (Sequential Approach: Designing and then Connectivity Checking)}
Algorithm \ref{Algo:4} describes the easiest approach for solving the designing and connectivity checking problem. In this approach, first using any one of the three algorithms presented in Section \ref{SubSec:31} the implicit user network is designed and subsequently the breadth first search is run on the designed network to obtain its connected components.

\begin{algorithm}[H]
	\KwData{Bi-adjacency Matrix ($\mathcal{B}$) of $\mathcal{G}$.}
	\KwResult{Adjacency Matrix ($\mathcal{A}$) of $G$ and $C_1, C_2, \ldots, C_k$, where $\forall j \in [k]$, $C_j$ is a connected component.}
	Step 1: Apply any one of Algorithm \ref{Brufo} or \ref{Algo:Clique} or \ref{Algo:Matmul} to design the network. \\
	Step 2: Run \emph{Breadth First Search (BFS)} Algorithm for finding the connected components.
	\caption{Sequential approach for designing and connectivity checking of implicit social network}
	\label{Algo:4}
\end{algorithm}
Before analyzing Algorithm \ref{Algo:4}, we first state and prove the following lemma.
\begin{mylemma} \label{Lemma:4}
Even if the user\mbox{-}item rating data is sparse, the implicit user network may be dense.
\end{mylemma}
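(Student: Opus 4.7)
The plan is to prove this lemma by an explicit construction, since the word ``may'' in the statement only asks for the existence of one instance in which $\mathcal{G}$ is sparse while the corresponding $G$ is dense. The natural construction to aim for is one in which a single highly rated item forces a large clique in $G$ via Lemma \ref{Lemma:1}, while the overall edge count in $\mathcal{G}$ stays far below the maximum possible $n_1 n_2$.

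Concretely, I would first fix a notion of sparsity (edge density $m / (n_1 n_2)$ tending to $0$) and density (edge density $m' / \binom{n_1}{2}$ bounded below by a constant, ideally equal to $1$). Then I would construct an instance as follows: take $n_2 \geq 2$ items and let the first item $i_1$ be rated by \emph{every} user in $U$, while every other item is rated by at most a constant number of users. This gives $m = n_1 + O(n_2)$ edges in $\mathcal{G}$, so $m/(n_1 n_2) \to 0$ as $n_1,n_2 \to \infty$, which certifies sparsity of $\mathcal{G}$.

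Next, I would invoke Lemma \ref{Lemma:1} on the item $i_1$: since $N_{\mathcal{G}}(i_1) = U$, the set $U$ itself forms a clique in $G$. Consequently $G$ is the complete graph $K_{n_1}$ and contains $\binom{n_1}{2}$ edges, achieving density $1$. Putting the two computations together produces an example in which $\mathcal{G}$ is arbitrarily sparse while $G$ is as dense as possible, establishing the claim.

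I do not expect any serious obstacle here; the statement is essentially qualitative and the construction is short. The only point requiring a little care is phrasing the example so it is clear that sparsity in $\mathcal{G}$ and density in $G$ are measured in the standard asymptotic sense, and observing that the popularity of just a single item is enough to inflate $G$ to a clique regardless of how few edges the rest of $\mathcal{G}$ contributes. One could also remark that this construction matches the intuition discussed after Theorem \ref{Th:Run_Algo_2}, namely that the presence of even a few very popular items suffices to blow up $\Delta_I$, and hence the edge count of $G$, even when the overall rating matrix is extremely sparse.
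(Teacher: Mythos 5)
Your proposal is correct and follows essentially the same route as the paper's own proof: both arguments keep $\mathcal{G}$ sparse by letting only a constant number of items (in your case, a single item $i_1$) have $\mathcal{O}(n_1)$ raters while all other items have constant degree, and both then invoke Lemma \ref{Lemma:1} on that popular item to force a clique of $\mathcal{O}(n_1)$ vertices, hence $\mathcal{O}(n_1^{2})$ edges, in $G$. The only cosmetic differences are that you make the construction fully explicit ($N_{\mathcal{G}}(i_1)=U$, so $G=K_{n_1}$) and state the sparsity/density criteria as normalized edge densities rather than the paper's $E(\mathcal{G})=\mathcal{O}(n_1+n_2)$ versus $\mathcal{O}(n_1^{2})$ edge counts.
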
 
\begin{proof}
Assume that $\mathcal{G}(U,I,\mathcal{E})$ is a user\mbox{-}item rating data with $|U|=n_1$, and $|I|=n_2$. Now, $\mathcal{G}$ is sparse if $E(\mathcal{G})=\mathcal{O}(n_1+n_2)$. It is trivial that for any $i_p \in I$, $1 \leq deg_{\mathcal{G}} (i_p)\leq n_1$. Assume that in $\mathcal{G}$, constant number (say $c$) of items have their degree $\mathcal{O}(n_1)$ and remaining $(n_2-c)$ number of items have degree $\mathcal{O}(1)$. Now, the number of edges of $\mathcal{G}$ will be equal to the sum of the degrees of the items, i.e., $m=\underset{i_p \in I}{\sum}deg_{\mathcal{G}}(i_p)$. Now, the sum of the degrees of the items can be given by the following equation:
\begin{equation}
\underset{i_p \in I}{\sum}deg_{\mathcal{G}}(i_p)=c. \mathcal{O}(n_1) +(n_2-c).\mathcal{O}(1)
\end{equation}
\begin{center}
$\underset{i_p \in I}{\sum}deg_{\mathcal{G}}(i_p)=\mathcal{O}(n_1+n_2)$
\end{center}
This shows that if a few number of items have their degree as $\mathcal{O}(n_1)$ and reaming items have degree $\mathcal{O}(1)$ then it leads to a sparse user\mbox{-}item rating data. Now, pick any item having $\mathcal{O}(n_1)$ neighboring users in $\mathcal{G}$. As per Lemma \ref{Lemma:1}, this item will induce a clique of size $\mathcal{O}(n_1)$ in $G$. A clique of $\mathcal{O}(n_1)$ vertices will have $\mathcal{O}(n_1^{2})$ edges. This means the implicit user network $G$ will also have $\mathcal{O}(n_1^{2})$ edges. This means that the implicit social network is dense. Hence, even sparse user\mbox{-}item rating data may also lead to a dense implicit user network. This completes the proof.
\end{proof}
Now, we analyze Algorithm \ref{Algo:4} for its time and space requirement. Let, $m^{'}$ be the number of edges present in the implicit social network. Hence, performing BFS on $G$ requires $\mathcal{O}(n_1+m^{'})$ time. As shown in Lemma \ref{Lemma:4}, even for sparse user\mbox{-}item rating data, $m^{'}=\mathcal{O}(n_1^{2})$. Hence, the time requirement for performing BFS is of $\mathcal{O}(n_1^{2})$. As simple implementation of BFS requires linear space, hence additional space requirement for performing the BFS is $\mathcal{O}(n_1)$. It is natural that the running time of Algorithm \ref{Algo:4} will depend upon which algorithm is used for designing the implicit user network. If we use Algorithm \ref{Algo:Clique} for designing the network the time and space requirement by Algorithm \ref{Algo:4} will be $\mathcal{O}(\Delta_{I}^{2} n_2 + n_1^{2})$, and $\mathcal{O}(n_1^{2}+n_1)=\mathcal{O}(n_1^{2})$, respectively. Hence, Theorem \ref{Th:Run_Algo_4} holds. 
\begin{mytheo} \label{Th:Run_Algo_4}
Designing and connectivity checking of the implicit user network can be done in $\mathcal{O}(\Delta_{I}^{2} n_2 + n_1^{2})$ time and $\mathcal{O}(n_1^{2})$ space.
\end{mytheo}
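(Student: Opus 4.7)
The plan is to instantiate the two-step template of Algorithm~\ref{Algo:4} with the best designing subroutine whose cost is compatible with the claimed bound, and then charge the connectivity step against the size of the resulting graph. First, I would invoke Algorithm~\ref{Algo:Clique} for Step~1, so that by Theorem~\ref{Th:Run_Algo_2} the implicit user network $G(V,E)$ is built in $\mathcal{O}(\Delta_{I}^{2} n_{2})$ time while occupying $\mathcal{O}(n_{1}^{2})$ space (for its adjacency matrix $\mathcal{A}$). Note that the matrix-multiplication approach would give a bound of the form $\mathcal{O}(n_{1}^{\omega})$, which is not directly comparable to $\mathcal{O}(\Delta_{I}^{2} n_{2})$, whereas the exhaustive approach is dominated by the clique-addition one; so Algorithm~\ref{Algo:Clique} is the correct choice to match the claim.

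For Step~2, I would run standard BFS on $G$ from every yet-unvisited vertex, which runs in time $\mathcal{O}(n_{1}+m')$ and additional space $\mathcal{O}(n_{1})$ for the visited array and queue, where $m' = |E(G)|$. The key observation is that $m' \le \binom{n_{1}}{2} = \mathcal{O}(n_{1}^{2})$ trivially, so the BFS phase takes $\mathcal{O}(n_{1}^{2})$ time in the worst case. Adding the two phases yields total running time
\[
\mathcal{O}(\Delta_{I}^{2} n_{2}) + \mathcal{O}(n_{1}^{2}) \;=\; \mathcal{O}(\Delta_{I}^{2} n_{2} + n_{1}^{2}),
\]
and the space stays at $\mathcal{O}(n_{1}^{2}+n_{1}) = \mathcal{O}(n_{1}^{2})$, since the adjacency matrix dominates the auxiliary BFS storage.

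The only mildly delicate point is justifying why the $n_{1}^{2}$ term inside the time bound cannot in general be replaced by something smaller, so that the BFS contribution is not negligible. This is exactly what Lemma~\ref{Lemma:4} was stated for: even on sparse rating data $\mathcal{G}$, a single popular item of degree $\Theta(n_{1})$ forces a clique of size $\Theta(n_{1})$ in $G$ by Lemma~\ref{Lemma:1}, hence $m' = \Theta(n_{1}^{2})$ and the BFS cost is genuinely $\Theta(n_{1}^{2})$. Thus the $n_{1}^{2}$ summand is not an artefact of loose analysis; it is inherent to traversing $G$ whenever we represent it by its adjacency matrix $\mathcal{A}$. Putting these observations together yields the theorem, and I expect the main obstacle in a careful write-up to be precisely this sparsity-versus-density argument (handled via Lemma~\ref{Lemma:4}) rather than the routine additive combination of the two phases' costs.
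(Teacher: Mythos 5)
Your proposal matches the paper's own argument: instantiate Step~1 of Algorithm~\ref{Algo:4} with Algorithm~\ref{Algo:Clique} (Theorem~\ref{Th:Run_Algo_2} gives $\mathcal{O}(\Delta_{I}^{2}n_{2})$ time and $\mathcal{O}(n_{1}^{2})$ space), then bound the BFS phase by $\mathcal{O}(n_{1}+m')=\mathcal{O}(n_{1}^{2})$ with $\mathcal{O}(n_{1})$ auxiliary space, and add the costs. The only (harmless) difference is that you derive the upper bound $m'=\mathcal{O}(n_{1}^{2})$ from the trivial $m'\le\binom{n_{1}}{2}$ and invoke Lemma~\ref{Lemma:4} only to argue the $n_{1}^{2}$ term is genuinely needed, whereas the paper cites Lemma~\ref{Lemma:4} directly for the bound --- your reading is in fact the logically cleaner use of that lemma.
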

However, we can do the designing and connectivity checking of the implicit user network at the same time and it is much beneficial in terms of computational time. We describe this method in the following section.
\subsubsection{Method 2 (Concurrent Approach: Designing and Connectivity Checking Simultaneously)} 
For a given user\mbox{-}item rating data, Algorithm \ref{Algo:Design_and _Connectivity_Checking} performs designing and connectivity checking of the implicit user network simultaneously. As we observe in the experimentation, this method is much more efficient than Algorithm \ref{Algo:4}. Here, we describe the working principle of Algorithm \ref{Algo:Design_and _Connectivity_Checking}. Line $1$ to $8$ are mostly initialization statements where we create the adjacency matrix ($A$) of $G$, a boolean array $Status$ of length $n_2$ and initialized both of them to $0$. $Status(i_p)=1$ means that the clique $N_{\mathcal{G}}(i_p)$ has been added into the implicit user network. Rest part of the algorithm works as follows. If the entire user set has not been  exhausted yet, then start a new component and randomly pick a user from the remaining set of users. Let, the randomly chosen user be $u$. Next step is to find out the neighbor(s) of $u$ in $\mathcal{G}$. After that, for every item in $N_{\mathcal{G}}(u)$, the following steps are  performed. 
\begin{itemize}
\item Pick an item from the list, and if its corresponding entry in $Status$ is $0$ (which means the clique consisting of the vertices of its neighborhood in $\mathcal{G}$ has not been added) then invoke the $Add\_Clique$ function. This function performs the following task. If the neighborhood size is $1$, then it just returns, else for every pair of vertices of the neighborhood, it puts $1$ in the corresponding entries of $\mathcal{A}$.
\item Once the clique is added, the corresponding entry in the $Status$ vector is set to $1$.
\item Those vertices of the clique which are not in the current connected component, are included in it and excluded from the current set of users.
\item Now, take all the neighborhood users of the item and then for each one of these users pick their neighborhood items in $\mathcal{G}$. Check their entry in the $Status$ vector. If it is $0$ then put the item in the list $N$.
\end{itemize}
  These steps are carried out until the list $N$ becomes empty.  This ends the description of Algorithm \ref{Algo:Design_and _Connectivity_Checking}. Next, we report a few important observations, which will help us to argue the correctness, and also the running time of this algorithm. 

\begin{algorithm}[h]
	\KwData{Bi-adjacency Matrix ($\mathcal{B}$) of $\mathcal{G}$.}
	\KwResult{ Adjacency matrix of $G$ and the components $C_1, C_2, \ldots, C_k$ of $V(G)$.}
	$U \leftarrow \left\{u_1, u_2,\dots\ ,u_{n_1}\right\}$ \tcp*{The Set of Users}\
	$I \leftarrow \left\{i_1, i_2,\dots\ ,i_{n_2}\right\}$ \tcp*{The Set of Items}\
	$n_1 \leftarrow \mathcal{B}.no\_ of\_rows()$ \tcp*{Number of Users}\
	$n_2 \leftarrow \mathcal{B}.no\_ of\_columns()$\tcp*{Number of Items}\
	$N \leftarrow \emptyset$\;
	$CreateMatrix(A, n_1, n_1, 0)$ \tcp*{Adjacency Matrix of User Network}\
	$CreateBooleanVector(Status,n_2,0)$\;
	$j \longleftarrow 0$\;
	\While{$ U \neq \phi$}{
	$j \longleftarrow j+1$\;
	$\text{Create the empty list }C_{j}$\;
	$u \longleftarrow \text{Randomly pick a user from } U$\;
	$N \longleftarrow N_{\mathcal{G}}(u)$\;
	\For{$\text{All } i \in N$}{
	\If{$Status(i)=0$}{
	$Add\_Clique (A, N_{\mathcal{G}}(i))$\;
	$Status(i) \longleftarrow 1$\;
	\For{$\text{ All } v \in N_{\mathcal{G}}(i)$}{
	\If{$v \notin C_j$}{
	$C_j \longleftarrow C_j \cup \{v\}$\;
	}
	}
	$U \longleftarrow U \setminus N_{\mathcal{G}}(i)$\;
	\For{$\text{All }v \in N_{\mathcal{G}}(i)$}{
	\For{$\text{All }p \in N_{\mathcal{G}}(v)$}{
	\If{$Status(p)=0$}{
	$N \longleftarrow N \cup \{p\}$\;
	}
	}
	}
	}
	}
	}
	$\text{Function }Add\_Clique(\mathcal{A}, M) \{$\\
	\eIf{$|M| = 1$}{
	$\text{return}$\;
	}
	{
	\For{$\text{Every pair } u_x,u_y \in M$}{
	$\mathcal{A}[x,y] \longleftarrow 1$\;
	$\mathcal{A}[y,x] \longleftarrow 1$\;
	}
	}
	$\}$
\caption{Concurrent approach for designing and connectivity checking problem.}
\label{Algo:Design_and _Connectivity_Checking}
\end{algorithm}

\begin{myobservation} \label{Ob:1}
In Algorithm \ref{Algo:Design_and _Connectivity_Checking}, number of times \texttt{while} loop in Line No. $9$ will execute is same as the  number of connected components of the implicit user network.
\end{myobservation}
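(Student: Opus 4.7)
My plan is to establish a bijection between the iterations of the \texttt{while} loop at Line~9 and the connected components of the implicit user network $G$. To do this, I would fix an arbitrary iteration indexed by $j$, let $u$ be the user sampled at Line~12, and show that the set of users removed from $U$ during this iteration (equivalently, the set $C_j$ built during the iteration) is exactly the connected component of $G$ that contains $u$. Once this invariant is proved, the claim follows immediately: the \texttt{while} loop terminates precisely when $U = \emptyset$, so the number of iterations equals the number of times we must peel off a full component to exhaust $V(G)$, which is the number of connected components of $G$.

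The proof splits into two containments. For $C_j \subseteq \mathrm{Comp}_G(u)$, I would argue by induction on the order in which vertices are added to $C_j$. The first vertex added is a neighbor (in $\mathcal{G}$) of $u$ through some item $i \in N_{\mathcal{G}}(u)$; by Lemma~\ref{Lemma:1}, all vertices in $N_{\mathcal{G}}(i)$, including $u$ itself, form a clique in $G$, so every such vertex lies in $\mathrm{Comp}_G(u)$. For the inductive step, any subsequently added vertex $v$ enters $C_j$ via a call $Add\_Clique(\mathcal{A}, N_{\mathcal{G}}(i'))$ for some item $i'$ that was inserted into $N$ because some already-processed user $w \in C_j$ satisfied $i' \in N_{\mathcal{G}}(w)$; again by Lemma~\ref{Lemma:1}, $v$ and $w$ are $G$-adjacent, and by induction $w \in \mathrm{Comp}_G(u)$, so $v \in \mathrm{Comp}_G(u)$.

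For the reverse containment $\mathrm{Comp}_G(u) \subseteq C_j$, I would take any $w \in \mathrm{Comp}_G(u)$ and a path $u = w_0, w_1, \ldots, w_r = w$ in $G$, and show by induction on $r$ that $w$ is added to $C_j$ before the inner loops terminate. Each edge $(w_{\el-1}w_{\el})$ in $G$ corresponds, by the definition of the implicit user network, to at least one item $i_\el \in N_{\mathcal{G}}(w_{\el-1}) \cap N_{\mathcal{G}}(w_{\el})$. The induction hypothesis places $w_{\el-1}$ in $C_j$; I then need to verify that the item $i_\el$ is eventually popped from the working list $N$ with $Status(i_\el) = 0$. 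This is ensured by the loop at Lines~20--23, which, upon processing any user $v \in N_{\mathcal{G}}(i)$ added to $C_j$, inserts every item in $N_{\mathcal{G}}(v)$ whose status is still $0$ into $N$; together with the fact that the outer \texttt{for} loop over $N$ drains the list, this propagates the exploration across every $G$-edge incident to $C_j$.

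The main obstacle is arguing carefully that the \texttt{for} loop at Line~13, which iterates over the list $N$, actually picks up the items appended to $N$ later inside the loop body (at Line~23), since otherwise the BFS-like expansion would stall after one layer. I would address this by interpreting the loop over $N$ as a work-queue traversal: items are dequeued one at a time, and each iteration may enqueue new items, so the loop terminates only when no item with $Status = 0$ remains reachable from $C_j$. Assuming this queue-semantics reading (which matches the intent of the algorithm), the two containments combine to show $C_j = \mathrm{Comp}_G(u)$, and removing $N_{\mathcal{G}}(i)$ from $U$ at Line~19 throughout the iteration removes exactly this component, completing the proof.
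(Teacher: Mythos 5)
Your proposal is correct and follows essentially the same route as the paper's proof: both argue that each iteration of the \texttt{while} loop, starting from the randomly chosen user $u$, expands via the item cliques $N_{\mathcal{G}}(i)$ until exactly the connected component of $u$ has been built and removed from $U$, so the iteration count equals the component count. Your version is in fact more careful than the paper's on the reverse containment $\mathrm{Comp}_G(u)\subseteq C_j$ (the paper simply asserts that the list $N$ empties exactly when the component is finished, whereas you prove it by induction on path length) and on the work-queue semantics of the loop over $N$, both of which the paper leaves implicit.
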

\begin{proof}
We prove this statement by analyzing the control flow of the Algorithm \ref{Algo:Design_and _Connectivity_Checking}. Assume that in the first run of the \texttt{while} loop, at Line No. $12$ the user $u_x$ is chosen. After that, an item (say $i_y$) is picked randomly from $N_{\mathcal{G}}(u_x)$, and the clique consisting of the users of the neighborhood of $i_y$, i.e., $N_{\mathcal{G}}(i_y)$ is added. If the nodes of the clique are not in the current component then they are included into it. Also, all the neighbor items of the users in $N_{\mathcal{G}}(i_y)$ are added to the list $N$. Now, for any $i_x \in N$, it is important to observe that the $N_{\mathcal{G}}(i_x) \cap N_{\mathcal{G}}(i_y) \neq \emptyset$. Hence, $N_{\mathcal{G}}(i_x)$ and $N_{\mathcal{G}}(i_y)$ are connected. Now, applying this argument iteratively, the subgraph induced by the cliques corresponding to the items in $N$ will be connected. The list $N$ becomes empty when the connected component that was currently built is finished. So, once a user is chosen randomly from $U$ it first finishes the construction of the entire connected component in which the randomly chosen user belongs and next the algorithm chooses another user from the remaining set of users uniformly at random to construct another connected component of the implicit user network. Hence, the number of times user will be chosen is the same as the number of connected components. This implies that the number of times the \texttt{while} loop executes will be the same as the number of connected components of the implicit social network. This proves the statement. 
\end{proof}
\begin{myobservation} \label{Ob:2}
The $Add\_Clique$ function of Algorithm \ref{Algo:Design_and _Connectivity_Checking} will be invoked just once for every $i_p \in I$. 
\end{myobservation}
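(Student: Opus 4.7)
The plan is to establish the claim by splitting it into two halves: that $Add\_Clique$ is invoked \emph{at most} once per item, and that it is invoked \emph{at least} once per item (under the mild convention that every $i_p \in I$ has at least one rater in $\mathcal{G}$, which is implicit in the data model).

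The ``at most once'' half should follow immediately from reading off the control flow. The only place where $Add\_Clique(\mathcal{A}, N_{\mathcal{G}}(i))$ is called is inside the conditional at Line~$15$, guarded by $Status(i)=0$, and the very next statement at Line~$17$ sets $Status(i) \longleftarrow 1$. Since $Status$ is initialized to $0$ (Line~$7$) and is never reassigned to $0$ anywhere else in the algorithm, any future encounter of the same item will fail the guard and skip the call. Hence the number of invocations for any fixed $i_p$ is bounded above by one.

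For the ``at least once'' half I would argue at the component level using Observation~\ref{Ob:1}. Fix an item $i_p \in I$ and pick any rater $u^{\star} \in N_{\mathcal{G}}(i_p)$; let $C_j$ be the connected component of the implicit user network $G$ that contains $u^{\star}$. By Observation~\ref{Ob:1}, the outer \texttt{while} loop eventually runs an iteration whose seed user lies in $C_j$, because every user is ultimately removed from $U$ through membership in some component. I would then show by induction on the graph distance in $G$ between the seed user and $u^{\star}$ that during this very iteration $i_p$ is placed on the working list $N$, passes the $Status$ test, and triggers $Add\_Clique$.

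The main obstacle is formalizing this inductive reachability claim: I must verify that the inner loops at Lines~$20$--$24$ propagate correctly, i.e., whenever a user $v \in N_{\mathcal{G}}(i)$ is reached and $i' \in N_{\mathcal{G}}(v)$ still has $Status(i')=0$, the item $i'$ is appended to $N$. Chaining this through the path in $G$ from the seed to $u^{\star}$, which corresponds to a sequence of shared items in $\mathcal{G}$ all of whose cliques lie in $C_j$, yields the desired reachability of $i_p$. Combined with the upper bound from the control-flow argument, this gives the exact count of one invocation per item.
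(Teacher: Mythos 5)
Your proof is correct, and its load-bearing mechanism is the same one the paper uses: the $Status(i)=0$ guard followed immediately by $Status(i)\longleftarrow 1$, combined with Observation~\ref{Ob:1} to control what happens across iterations of the outer \texttt{while} loop. Where you differ is the decomposition. The paper's proof only really establishes the upper bound: it argues that once a clique is added the flag blocks any further call, and that flagged items are never re-inserted into the list $N$, and then concludes ``just once.'' You explicitly split the claim into ``at most once'' and ``at least once,'' and the second half is genuinely absent from the paper: showing that every item is actually reached requires the inductive propagation argument you sketch for the inner \texttt{for} loops that append unflagged items to $N$, chained along a path in $G$ from the seed user to a rater of the target item. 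This extra care also surfaces a boundary case the paper silently ignores, namely an item with no raters, for which $Add\_Clique$ is invoked zero times; your standing convention handles it. Your version is thus the more faithful reading of the word ``just'' in the statement; the cost is that the reachability induction still needs to be written out (it goes through by induction on the order in which items enter $N$, using the fact that two users adjacent in $G$ share an item whose clique lies in the component currently being built).
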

\begin{proof}
In Observation \ref{Ob:1}, it has already been shown that once a user is chosen randomly at Line No. $12$, the entire connected component is built without picking any further user randomly. So, it is sufficient to show that even in one single run of the \texttt{while} loop, for all the items $i_x$ linked with the users of the connected component which is currently being built, the $Add\_Clique$ function is called only once. As soon as the clique corresponding to the item is added, its flag in the $Status$ array is set to $1$. Also, in Line No. $25$ an item with its $Status$ flag $1$ has not been added into the list $N$. Hence, for every item the $Add\_Clique$ function is invoked just once. This completes the proof.
\end{proof}
Now, we analyze Algorithm \ref{Algo:Design_and _Connectivity_Checking} to understand its time and space requirements. From Line No. $1$ to $8$ all are initialization statement, and hence takes $\mathcal{O}(1)$ time. It has been shown in Observation \ref{Ob:1} that the number of times the \texttt{while} loop of Line No. $9$ will run is the same as the number of connected components the implicit social network has. Theoretically, the number of connected components of a $n$ vertex network is of $\mathcal{O}(n)$. However, from practical evidence,  every social network contains a giant component that contains a significant fraction of nodes. As an example, in the Twitter follow graph \cite{myers2014information}, the largest connected component contains $92.9 \%$ of the active users. Hence, in practice the number of connected components is constant. In our analysis also, we assume that the number of connected components of the implicit social network will be constant. In turn, it implies that the \texttt{while} loop will also run for $\mathcal{O}(1)$ time. Inside the loop execution of  the statements from Line No. $10$ to $12$ requires $\mathcal{O}(1)$ time. Now, for the randomly chosen user (say, $u$) computing $N_{\mathcal{G}}(u)$ requires $\mathcal{O}(n_2)$ time. Hence, the \texttt{for} loop at Line No. $14$ will execute $\mathcal{O}(n_2)$ times. It has been shown in Observation \ref{Ob:2} that for every item $i_x \in I$, the $Add\_Clique(.)$ function will be invoked just once. Now, the running time of this function will depend on the size of the clique. Now, assume that the maximum degree among all the items is of $\mathcal{O}(\Delta_{I})$. Hence, the running time of the $Add\_Clique(.)$ function will be of $\mathcal{O}(\Delta_{I}^{2})$. After that, setting the flag to true of the item for which the clique has been added in the implicit user network at Line No. $17$ will require $\mathcal{O}(1)$ time. For any $i_x \in I$, $N(i_x)$ could be of at most $\mathcal{O}(n_1)$. Hence, the number of times the \texttt{for} loop in Line No. $18$ will run is of $\mathcal{O}(n_1)$. Now, the size of a component could be as big as $\mathcal{O}(n_1)$. Hence, the condition checking of the \texttt{if} statement at Line No. $19$ will require $\mathcal{O}(n_1)$ time. Hence, the running time of the \texttt{for} loop from Line No. $18$ to $20$ requires $\mathcal{O}(n_1^{2})$ time. Now, performing the `set minus' operation at Line No. $21$ requires $\mathcal{O}(n_1^{2})$ time. It is quite easy to observe that the \texttt{for} loops at Line No. $22$ and $23$ can execute at most $\mathcal{O}(n_1)$ and $\mathcal{O}(n_2)$ times, respectively. After that, condition checking for the \texttt{if} statement at Line No. $24$ and  adding the item in the List $N$ at Line No. $25$ requires $\mathcal{O}(1)$ time. Now, we unwrap the time requirement of Algorithm \ref{Algo:Design_and _Connectivity_Checking} from bottom to top. Time requirement from Line No. $22$ to $25$ requires $\mathcal{O}(n_1n_2)$ time. Hence, the time requirement from Line No. $21$ to $25$ is of $\mathcal{O}(n_1^{2}+n_1n_2)$. This implies that the time requirement from Line No. $18$ to $25$ is of $\mathcal{O}(n_1^{2}+ n_1^{2} +n_1n_2)= \mathcal{O}(n_1^{2} +n_1n_2)$. Time requirement from Line No. $14$ to $25$ requires $\mathcal{O}(n_1(\Delta_{I}^{2}+ n_{1}^{2}+n_1n_2))=\mathcal{O}(n_1\Delta_{I}^{2}+n_{1}^{3}+n_{1}^{2}n_{2})$. Now, as $\Delta_{I}=\mathcal{O}(n_1)$, hence $\mathcal{O}(n_1\Delta_{I}^{2}+n_{1}^{3}+n_{1}^{2}n_{2})=\mathcal{O}(n_{1}^{3}+n_{1}^{2}n_{2})$. As, the \texttt{while} loop runs for a constant time, hence the time requirement from Line No. $9$ to $25$ is of $\mathcal{O}(n_2+n_{1}^{3}+n_{1}^{2}n_{2})=\mathcal{O}(n_{1}^{3}+n_{1}^{2}n_{2})$. As other statements of Algorithm \ref{Algo:Design_and _Connectivity_Checking} requires $\mathcal{O}(1)$ time, hence the time requirement of this algorithm is of $\mathcal{O}(n_{1}^{3}+n_{1}^{2}n_{2})$. Additional space consumed by Algorithm \ref{Algo:Design_and _Connectivity_Checking} is due to storing $A$, which consumes $\mathcal{O}(n_1^{2})$ space; $Status$, $N$ which requires $\mathcal{O}(n_2)$ space, and storing all the components together requires $\mathcal{O}(n_1)$ space. Hence, the total space requirement of Algorithm \ref{Algo:Design_and _Connectivity_Checking} is of $\mathcal{O}(n_1^{2}+n_2+n_1)=\mathcal{O}(n_1^{2}+n_2)$. Hence, the following theorem holds.
\begin{mytheo}
The time and space requirement of Algorithm \ref{Algo:Design_and _Connectivity_Checking} is of $\mathcal{O}(n_{1}^{3}+n_{1}^{2}n_{2})$ and $\mathcal{O}(n_{1}^{2}+n_{2})$, respectively.
\end{mytheo}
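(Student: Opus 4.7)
The plan is to bound the cost of Algorithm~\ref{Algo:Design_and _Connectivity_Checking} block-by-block, leveraging Observations~\ref{Ob:1} and~\ref{Ob:2} to collapse what look like nested iterations over items into amortised work, and then to sum the remaining per-user costs. First I would dispose of the initialisation (Lines~$1$--$8$): creating the $n_1\times n_1$ matrix $A$ and the length-$n_2$ vector $Status$ dominates here, contributing $\mathcal{O}(n_1^2+n_2)$ in time and $\mathcal{O}(n_1^2+n_2)$ in space. Using the practical assumption stated in the text that the implicit user network has $\mathcal{O}(1)$ connected components, Observation~\ref{Ob:1} lets me treat the outer \texttt{while} loop as executing a constant number of times, so it suffices to bound one iteration and multiply by $\mathcal{O}(1)$.

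Next I would analyse a single pass of the \texttt{while} loop. Picking a random user $u$ and computing $N_{\mathcal{G}}(u)$ costs $\mathcal{O}(n_2)$ (one scan of a row of $\mathcal{B}$). The inner \texttt{for} loop in Line~$14$ ranges over items added to $N$; by Observation~\ref{Ob:2}, $Add\_Clique$ is invoked at most once per item $i\in I$, and each invocation costs $\mathcal{O}(\Delta_{I}^{2})$, giving a total $\mathcal{O}(n_2 \Delta_I^2)$ contribution from all clique additions. For every such item, we then iterate over $N_{\mathcal{G}}(i)$ to update the current component (Lines~$18$--$20$): the membership test $v\notin C_j$ together with the insertion costs $\mathcal{O}(n_1)$ per vertex, so $\mathcal{O}(n_1 \Delta_I) = \mathcal{O}(n_1^2)$ per item, and hence $\mathcal{O}(n_1^2 n_2)$ over all items. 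The set-difference $U\setminus N_{\mathcal{G}}(i)$ at Line~$21$ contributes $\mathcal{O}(n_1 \Delta_I)$ per item. Finally, the nested loops at Lines~$22$--$25$ touch at most $\Delta_I$ users and for each of them scan up to $n_2$ items of $\mathcal{B}$, adding $\mathcal{O}(\Delta_I n_2)$ per item, i.e.\ $\mathcal{O}(\Delta_I n_2^2)$ overall. Using $\Delta_I = \mathcal{O}(n_1)$, all of these terms collapse into $\mathcal{O}(n_1^3 + n_1^2 n_2)$, which dominates the $\mathcal{O}(n_1^2+n_2)$ initialisation and yields the claimed time bound.

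For the space bound I would simply enumerate the persistent data structures: the adjacency matrix $A$ costs $\mathcal{O}(n_1^2)$, the array $Status$ and the working list $N$ each cost $\mathcal{O}(n_2)$, and the collection of components $C_1,\dots,C_k$ together store each user at most once and hence consume $\mathcal{O}(n_1)$. Summing gives $\mathcal{O}(n_1^2 + n_2 + n_1) = \mathcal{O}(n_1^2 + n_2)$.

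The main obstacle I expect is the careful amortisation: a naive reading of the nested loops at Lines~$14$--$25$ suggests a much worse bound, and the argument only goes through once one observes, via Observation~\ref{Ob:2}, that each item triggers its inner work exactly once per \texttt{while}-iteration, and, via Observation~\ref{Ob:1}, that the outer \texttt{while} loop itself runs only a constant number of times under the stated ``giant component'' assumption. Making this bookkeeping explicit so that the dominating $\mathcal{O}(n_1^2 n_2)$ term can be traced directly to the component-update step is the delicate part; everything else is a routine sum.
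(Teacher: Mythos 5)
Your decomposition is essentially the paper's: dispose of the initialisation, invoke Observation~\ref{Ob:1} to treat the outer \texttt{while} loop as running $\mathcal{O}(1)$ times under the giant-component assumption, invoke Observation~\ref{Ob:2} to charge each $Add\_Clique$ call and the surrounding per-item work once per item, and sum. Your space accounting ($\mathcal{O}(n_1^2)$ for $A$, $\mathcal{O}(n_2)$ for $Status$ and $N$, $\mathcal{O}(n_1)$ for the components) matches the paper exactly. The genuine gap is in your final ``collapse'' step for the time bound: the contribution you derive for Lines~$22$--$25$, namely $\mathcal{O}(\Delta_I n_2)$ per item and hence $\mathcal{O}(\Delta_I n_2^{2}) = \mathcal{O}(n_1 n_2^{2})$ in total, is \emph{not} bounded by $\mathcal{O}(n_1^{3}+n_1^{2}n_2)$ unless $n_2 = \mathcal{O}(n_1)$; the ratio $n_1 n_2^{2} / (n_1^{2} n_2) = n_2/n_1$ is unbounded, and indeed $n_2 > n_1$ for the Epinions data used in the paper. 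As written, the last sentence of your time analysis does not go through.

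The paper reaches the stated bound by a different bookkeeping at exactly this point: it bounds one execution of the body of the loop at Line~$14$ by $\mathcal{O}(\Delta_I^{2}+n_1^{2}+n_1 n_2)$ and then multiplies by $n_1$ iterations (not by the $n_2$ items you charge against), obtaining $\mathcal{O}(n_1\Delta_I^{2}+n_1^{3}+n_1^{2}n_2)=\mathcal{O}(n_1^{3}+n_1^{2}n_2)$. To make your per-item amortisation deliver the theorem you would need either the extra hypothesis $n_2=\mathcal{O}(n_1)$, or a sharper bound on the total work of Lines~$22$--$25$, e.g.\ writing it as $\sum_{v\in U} deg_{\mathcal{G}}(v)^{2}$ and arguing that this quantity is $\mathcal{O}(n_1^{2}n_2)$ in the regime considered; neither appears in your sketch. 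The remaining pieces --- initialisation, the $\mathcal{O}(n_2\Delta_I^{2})$ clique additions, the $\mathcal{O}(n_1^{2}n_2)$ component updates, the set difference, and the space bound --- are correct and line up with the paper's argument.
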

\section{Experimental Evaluation} \label{Sec:Experimental}
In this section, we perform an extensive set of experiments for evaluating the proposed methodologies. Initially, we start with a brief  description of the datasets.
\subsection{Description of the Datasets}
In our experiments, we have used the following user\mbox{-}item rating dataset. All of them are collected from \textit{ Koblenz Network Collection (KONECT)} \footnote{ \url{http://konect.uni-koblenz.de/}}.

\begin{itemize}
\item \textbf{Filmtrust} \cite{guo2013novel}: This is a bipartite rating network between users and movies. An undirected edge between user and item denotes the user has rated the item. Edge weight represents rating value based on a particular rating scale. 
\item \textbf{MovieLens} \footnote{\url{http://konect.uni-koblenz.de/networks/movielens-1m}}: This user\mbox{-}item rating network contains one million movie ratings from \url{http://movielens.umn.edu/}. Left nodes are users and right nodes are movies. An edge between a user and a movie shows that the user has rated the movie.
\item \textbf{Epinions} \footnote{\url{http://konect.uni-koblenz.de/networks/epinions-rating}} \cite{massa2005controversial}: This is the bipartite rating network of Epinions, an on\mbox{-}line product rating site. Each edge connects a user with a product and represents a rating as edge weight. 
\end{itemize} 
As our study is concerned with binary ratings, hence, for all the datasets, we do re\mbox{-}scaling of the rating values in $0-1$ scale. Table \ref{Tab:1} contains the basic statistics of the datasets. As mentioned previously, from the density values presented in Table \ref{Tab:1} one can convince himself that the rating datasets are extremely sparse.

\begin{table}[h]
\caption{Basic Statistics of the Datasets}
\label{Tab:1}       
\begin{tabular}{lllll}
\hline\noalign{\smallskip}
Dataset Name & \# Users & \# Items & \# Ratings & Density \\
\noalign{\smallskip}\hline\noalign{\smallskip}
Filmtrust & 1508 & 2071 & 35497 & 0.011366 \\
MovieLens & 9746 & 6040 & 1000209 & 0.01699\\
Epinions & 40163 & 139738 & 664824 & 0.000118\\
\noalign{\smallskip}\hline
\end{tabular}
\end{table}

\subsection{Experimental Setup}
As there is no prior work on the designing implicit user network from the user\mbox{-}item rating data, we can not compare the performance of the methods with any existing methods. Instead, we do a comparative study among the proposed methodologies itself. All the proposed algorithms have been implemented on Python 2.7 with NetworkX 1.9.1 Package. All the experiments have been carried out using a 5 node high performance computing cluster each of them has 32 cores and 64 GB of RAM.
\par As our goal is to make a comparative study regarding computational time and scalability of the proposed algorithms, for each of the datasets, we start with $\frac{1}{5}$ number of ratings of the original dataset from the top, and subsequently add $\frac{1}{5}$ more, and continued until the whole dataset is exhausted. As the exhaustive search method is taking huge computational time, we do not report results for this method on larger datasets (i.e., other than the `Filmtrust'). 
\subsection{Experimental Results with Observation}
Here, we describe our obtained results and list out the key observations. We start with reporting the results for the implicit user network design problem. 

\begin{figure}[ht]
\begin{subfigure}{.5\textwidth}
  \centering
  \includegraphics[width=.8\linewidth]{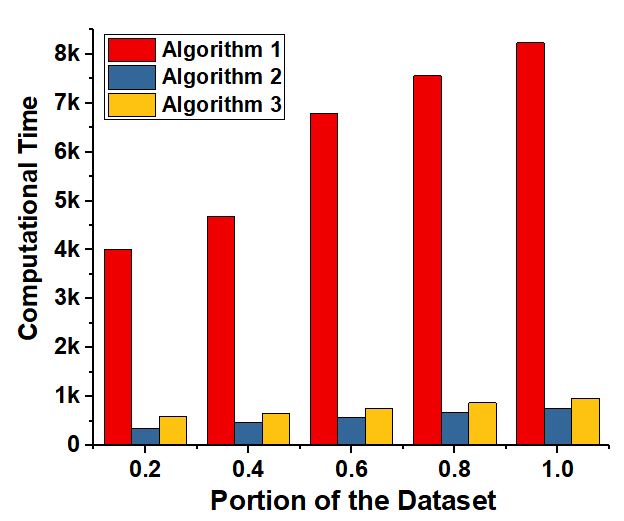}  
  \caption{Flimtrust Dataset}
  \label{fig:sub-first}
\end{subfigure}
\begin{subfigure}{.5\textwidth}
  \centering
  \includegraphics[width=.8\linewidth]{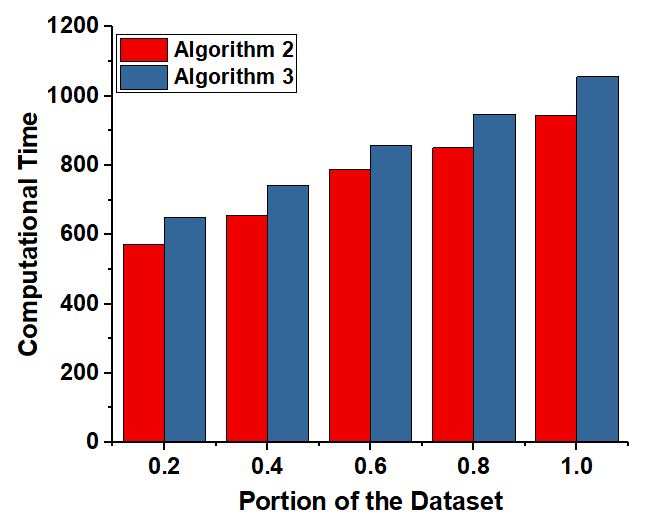}  
  \caption{Movielens Dataset}
  \label{fig:sub-second}
\end{subfigure}
\vspace{0.5 cm}
\begin{center}
\includegraphics[width=.42\linewidth]{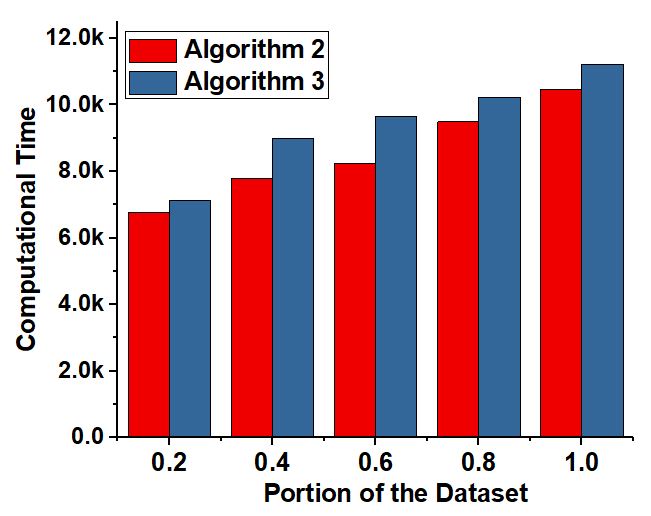} \\
(c) Epinions Dataset 
\end{center}
\caption{Portion of the dataset vs. computational time plots for the implicit user network design problem on different datasets. Here, Algorithm \ref{Brufo}, \ref{Algo:Clique}, and \ref{Algo:Matmul} denotes the `Exhaustive Search Approach', `Clique Addition Approach', and `Matrix Multiplication\mbox{-}based Approach', respectively.}
\label{Fig:2}
\end{figure}

\par Figure \ref{Fig:2} shows the portion of the dataset versus computational time plots for the implicit user network design problem. From Figure \ref{Fig:2}a, it can be clearly observed that the time requirement for the `exhaustive search method' is very very high compared to the both `clique addition approach' and `matrix multiplication\mbox{-}based' approach. As an example, in case of `Filmtrust' dataset, when only top $\frac{1}{5}$ of the total number of ratings have been used, the computational time requirement for the implicit user network design for the exhaustive search approach is $4000$ Secs. However, the same for the `clique addition approach' and the `matrix multiplication\mbox{-}based approach' are $356$ Secs and $596$ Secs, respectively. The key observations are as follows:
\begin{itemize}
\item Among the proposed approaches, the `exhaustive search approach' takes the maximum computational time, as for every pair of users this method searches the entire item set to check for the existence of a common item.
\item Among the remaining two methods, from the experiments, it has been observed that the `clique addition approach' is much faster than the other method. As an example, when the whole `Epinion' dataset has been used, the time requirement to construct the implicit social network by the `clique addition approach' and `matrix multiplication\mbox{-}based approach' are $10456$ seconds and $11226$ seconds, respectively.
\end{itemize}
So, it can be concluded that the `clique addition approach' is the superior one and should be used to construct the implicit user network, particularly when the rating dataset is sparse. Next, we proceed to report the experimental results related to the `implicit user network design with connectivity checking' problem.

\begin{figure}[ht]
\begin{subfigure}{.5\textwidth}
  \centering
  \includegraphics[width=.8\linewidth]{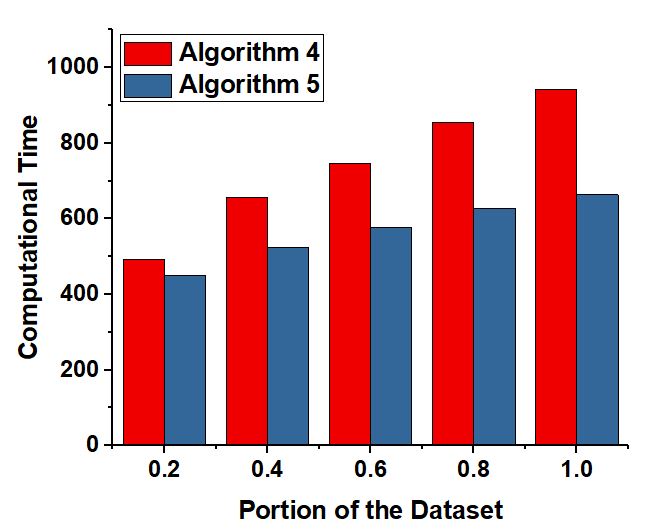}  
  \caption{Flimtrust Dataset}
  \label{fig:sub-first}
\end{subfigure}
\begin{subfigure}{.5\textwidth}
  \centering
  \includegraphics[width=.8\linewidth]{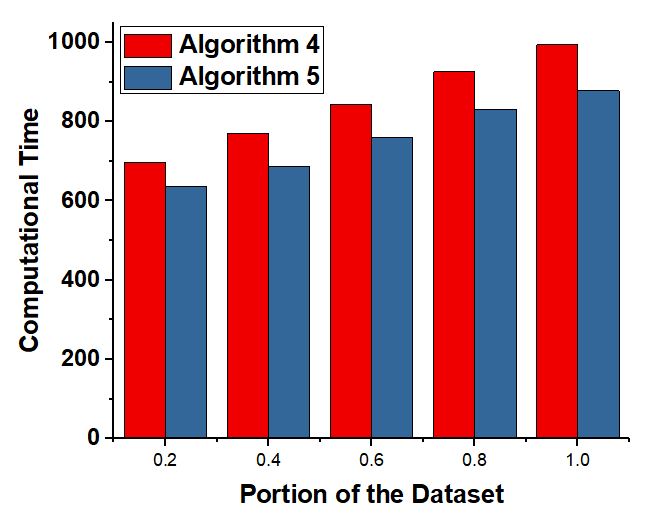}  
  \caption{Movielens Dataset}
  \label{fig:sub-second}
\end{subfigure}
\vspace{0.5 cm}
\begin{center}
\includegraphics[width=.42\linewidth]{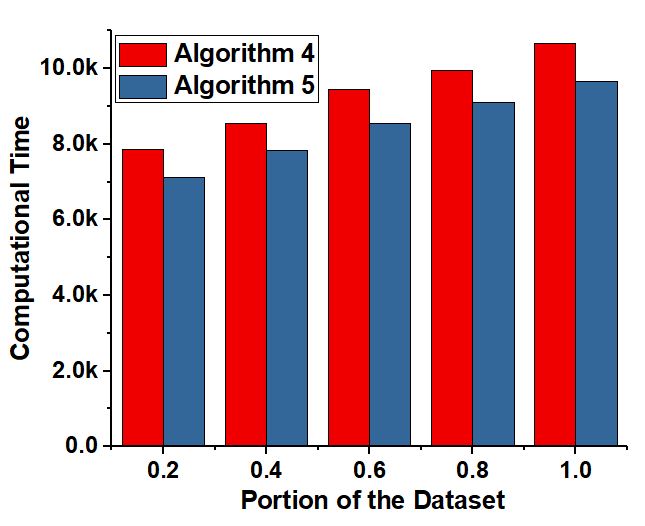} \\
(c) Epinions Dataset 
\end{center}
\caption{Portion of the dataset vs. computational time plots for the implicit user network design and connectivity checking problem on different datasets. Here, Algorithm \ref{Algo:4}, and \ref{Algo:Design_and _Connectivity_Checking} refers to the Sequential Approach: First Design and then Connectivity Checking, and Concurrent Approach: Design and Connectivity Checking simultaneously.}
\label{Fig:3}
\end{figure}

\par Figure \ref{Fig:3} shows the portion of the ratings used vs. computational time plots for all the three datasets. From this figure, it can be observed that across all the datasets, the concurrent approach: designing and connectivity checking together (i.e., Algorithm \ref{Algo:Design_and _Connectivity_Checking}) is much more efficient compared to the sequential approach: design and then connectivity checking method (Algorithm \ref{Algo:4}). As an example, when the entire `MovieLens' dataset has been used, the  computational time requirement for Algorithm \ref{Algo:Design_and _Connectivity_Checking} and \ref{Algo:4} is $878$ and $994$, respectively. The reason behind this is as follows: at the time of designing, the edges of the implicit user network are traversed, and also, at the time of connectivity checking the same traversal is happening once more which is redundant. As in the second method, we are cleverly maintaining the connectivity information during the designing itself, this saves the computational time.

\section{Conclusion} \label{Sec:Conclusion}
In this paper, we have introduced two related problems regarding the designing and connectivity checking of the implicit user network from the user\mbox{-}item rating data. For the implicit user network design problem, we have proposed three different approaches, namely exhaustive search approach, clique addition approach, and matrix multiplication\mbox{-}based approach. For the implicit user network design with connectivity checking problem, we have proposed two different approaches. The first one is the sequential approach: designing and then connectivity checking, and the other one is a concurrent approach: an incremental algorithm, which does the designing and connectivity checking simultaneously. Experimentation with real\mbox{-}world user\mbox{-}item rating datasets show that for the first problem the `clique addition approach' performs better than the rest of the two approaches since the datasets are extremely sparse. For the second problem, it is observed that the concurrent approach takes less computational time. Now, it will be an interesting future work to use the implicit social network information for social recommendation, seed set selection for viral marketing and study its performance.
%


%
%

\bibliographystyle{spbasic}      
\bibliography{sigproc}   


\end{document}